\newcommand{\etal}{\emph{et~al.}}
\newcommand{\eg}{\emph{e.g.}}
\newcommand{\keywords}[1]{\par\addvspace\baselineskip
\noindent\keywordname\enspace\ignorespaces#1}
\renewenvironment{proof}{\emph{Proof.}}{\hfill $\Box$ \medskip\\}
\title{Kinetic Data Structures for the Semi-Yao Graph\\ and All Nearest Neighbors in $\mathbb{R}^d$}
\author{\small
Zahed Rahmati$^1$
\and
Mohammad Ali Abam$^2$
\and
Valerie King$^1$
\and
Sue Whitesides$^1$
}
\institute{\small
$^1$Dept. of Computer Science, University of Victoria, Canada. \\{\tt \{rahmati,val,sue\}@uvic.ca}\\
$^2$Dept. of Computer Engineering, Sharif University of Technology, Iran. \\{\tt abam@sharif.edu}
}
\titlerunning{Kinetic Data Structures for the Semi-Yao Graph and All Nearest Neighbors}
\authorrunning{Z. Rahmati, M. A. Abam, V. King, and S. Whitesides}
\begin{document}
\maketitle
\begin{abstract}
This paper presents a simple kinetic data structure for maintaining all the nearest neighbors of a set of $n$ moving points in $\mathbb{R}^d$, where the trajectory of each point is an algebraic function of at most constant degree $s$.  The approach is based on maintaining the edges of the Semi-Yao graph, a sparse graph whose edge set includes the pairs of nearest neighbors as a subset. 

\vspace{+5pt}
Our kinetic data structure (KDS) for maintaining all the nearest neighbors is deterministic. It processes $O(n^2\beta_{2s+2}^2(n)\log n)$ events with a total cost of $O(n^2\beta_{2s+2}(n)\log^{d+1} n)$. Here, $\beta_s(n)$ is an extremely slow-growing function. The best previous KDS for all the nearest neighbors in $ \mathbb{R}^d$ is by Agarwal, Kaplan, and Sharir (TALG 2008). It is a randomized result. Our structure and analysis are simpler than theirs. Also, we improve their result by a factor of $\log^d n$ in the number of events and by a $\log n$ factor in the total cost.

\vspace{+5pt}
This paper generalizes and improves the 2013 work of Rahmati, King and Whitesides (SoCG 2013) on maintaining the Semi-Yao graph in $\mathbb{R}^2$;  its new technique provides the first KDS for the Semi-Yao graph in $\mathbb{R}^d$. Our KDS is local in the worst case, meaning that only a constant number of events is associated with any one point at any time.

\vspace{+5pt}
For maintaining all the nearest neighbors, neither our KDS nor the KDS by Agarwal~\etal~is local, and furthermore, each event in our KDS and in their KDS is handled in polylogarithmic time in an amortized sense. 

\vspace{+5pt}
Finally, in this paper, we also give a KDS for maintenance of all the $(1+\epsilon)$-nearest neighbors which is local and each event can be handled in a polylogarithmic worst-case time.
\keywords{kinetic data structure, all nearest neighbors, Semi-Yao graph (theta graph), range tree, closest pair, $\mathbb{R}^d$}
\end{abstract}
\newpage
\setcounter{page}{1}
\section{Introduction}
Let $P$ be a set of $n$ points in $R^d$. Finding the nearest neighbor to a query point, which was called the \textit{post office problem} by Donald Knuth in 1973~\cite{Knuth:1973:ACP}, is fundamental in computational geometry. The \textit{all nearest neighbors} problem, a variant of the post office problem, is to find the nearest neighbor to each point $p\in P$. Given any $\epsilon>0$, the \textit{all $(1+\epsilon)$-nearest neighbors} problem is to find some $\hat{q}\in P$ for each point $p\in P$, such that the Euclidean distance $|p\hat{q}|$ between $p$ and $\hat{q}$ is within a factor of $(1+\epsilon)$ of the Euclidean distance between $p$ and its nearest neighbor. The \textit{nearest neighbor graph} is constructed by connecting each point $p\in P$ to its nearest neighbor. The \textit{closest pair} problem is to find the endpoints of the edge in the nearest neighbor graph with minimum length. The \textit{Semi-Yao graph} (or \textit{theta-graph}) is a too long-well-studied sparse proximity graph~\cite{Clarkson:1987:AAS:28395.28402,Keil:1988:ACE:61764.61787}. This graph can be constructed by partitioning the space around each point $p\in P$ into $c$ cones $C_l(p)$, $1\leq l \leq c$, with $p$ the shared apex of the cones, and then connecting the point $p$ to a point $q$ inside each of these cones, such that the point $q\in P\cap C_l(p)$ has the minimum length projection on the vector in the direction of the symmetry $C_l(p)$; see Figure~\ref{fig:SY6}(b). 
By treating the number $c$ of cones at each point as a parameter of the Semi-Yao graph, one obtains an important class of sparse graphs, \textit{t-spanners}, with different stretch factors $t$~\cite{BBCRV2013Theta4,Bonichon:2010:CTD:1939238.1939265,BMRV2013Theta5,BRV2013Theta4k345}. 

The maintenance of attributes (e.g., the closest pair) of sets of moving points has been studied extensively over the past 15 years~\cite{DBLP:conf/swat/RahmatiZ12,Agarwal:2003:IMP:846156.846166,Alexandron:2007:KDD:1219156.1219201,Basch:1997:DSM:314161.314435,Karavelas:2001:SKG:365411.365441,DBLP:conf/gd/RahmatiZ12,DBLP:conf/iwoca/RahmatiZ11,DBLP:conf/socg12/Natin}. A basic framework for this study  is that of a kinetic data structure (KDS), which is in fact a set of data structures and algorithms to track the attributes of moving points. The problem of maintaining all the nearest neighbors, the closest pair, and the Semi-Yao graph on moving points are called the \textit{kinetic all nearest neighbors} problem, the \textit{kinetic closest pair} problem, and the \textit{kinetic Semi-Yao graph}, respectively. 

The kinetic maintenance of attributes is generally considered in two models: the \textit{standard KDS model} and the \textit{black-box KDS model}~\cite{Gao:2006:DSA:1646483.1646577,deBerg:2011:KCH:1998196.1998233}. In the black-box model, the locations of the objects are received at regular time steps. In the standard model, each object has a flight plan known in advance, and during the motion the object can change its flight plan at some times which are not known in advance. This paper considers the all nearest neighbors problem and the Semi-Yao graph in $\mathbb{R}^d$ for the standard KDS model and improves previous results; see Table~\ref{table:RelatedWork} in Appendix A. In addition, it offers results on all $(1+\epsilon)$-nearest neighbors in $\mathbb{R}^d$.
\vspace{+5pt}
\\
\textbf{Standard KDS framework.}
Basch, Guibas, and Hershberger~\cite{Basch:1997:DSM:314161.314435} introduced a \textit{kinetic data structure framework} to maintain the attributes of moving points in the standard KDS model. In this framework we assume each point $p\in P$ is moving with a trajectory  in $R^d$, where the $d$ coordinates of the trajectory of  the point $p$, which determine the position of $p$ as a function of time, are $d$ algebraic functions of at most constant degree $s$.  

The correctness of an attribute over time is determined based on correctness of a set of \textit{certificates}. A certificate is a boolean function of time, and its \textit{failure time} is the next time after the current time at which the certificate will become invalid. When a certificate fails, we say that an \textit{event} occurs. Using a \textit{priority queue} of the failure times of the certificates, we can know the next time after the current time that an event occurs. When the failure time of the certificate with highest priority in the priority queue is equal to the current time we invoke the update mechanism to reorganize the data structures and replace the invalid certificates with new valid ones. 

To analyse the performance of a KDS there are four standard criteria. A KDS distinguishes between two types of events: \textit{external events} and \textit{internal events}. An event that changes the desired attribute itself is called an external event and those events that cause only some internal changes in the data structures are called internal events. If the ratio between the total number of events and the number of external events is $O(\text{polylog}(n))$, the KDS is \textit{efficient}. The efficiency of the KDS gives an intuition of how many more events the KDS processes. If response time of the update mechanism to an event is $O(\text{polylog}(n))$, the KDS is \textit{responsive}. The compactness of a KDS refers to space complexity: if the KDS uses $O(n.\text{polylog}(n))$ space, it is \textit{compact}. The KDS is \textit{local} if the number of events associated with any point in the KDS is $O(\text{polylog}(n))$. The locality of the KDS is an important criterion; if a KDS satisfies locality, the KDS can be updated quickly and efficiently when an object changes its flight plan (trajectory) at times. 
\vspace{+5pt}
\\
\textbf{Related work.}
Basch, Guibas, and Hershberger (SODA'97)~\cite{Basch:1997:DSM:314161.314435} provided a KDS for maintenance of the closest pair in $\mathbb{R}^2$. Their KDS uses linear space and processes $O(n^2\beta_{2s+2}(n)\log n)$ events, each in time $O(\log^2 n)$. Here, $\beta_s(n)={\lambda_s(n)\over n}$ is an extremely slow-growing function and $\lambda_s(n)$ is the maximum length of Davenport-Schinzel sequences of order $s$ on $n$ symbols.

A common way to maintain attributes of moving points in $\mathbb{R}^d$ is to use \textit{kinetic multidimensional range trees}~\cite{Basch:1997:PPM:262839.262998,Agarwal:2008:KDD:1435375.1435379,Abam:2007:KKL:1247069.1247133}.  Basch~\etal~\cite{Basch:1997:PPM:262839.262998} and Agarwal~\etal~\cite{Agarwal:2008:KDD:1435375.1435379} use dynamic balanced trees to implement a kinetic range tree. Using rebalancing operations, they handle events to maintain a range tree. In particular, in their approaches, when an event between two points $p$ and $q$ occurs, we must delete $p$ and $q$ and reinsert them into the range tree. The range tree can be maintained over time using a dynamic range tree. One of the approaches to update the range trees is to carry out local and global rebuilding after a few operations, which gives $O(\log^d n)$ amortized time per operation~\cite{Mehlhorn:1984:DSA:1923}. Another approach, which uses merge and split operations, gives worst-case time $O(\log^d n)$ per operation~\cite{Willard:1985:ARR:3828.3839}. To avoid rebalancing the range tree after each operation, Abam and de Berg~\cite{Abam:2011:KSX:1971362.1971367} introduced a variant of the range trees, a \textit{rank-based range tree} (RBRT), which gives worst-case time $O(\log^d n)$ per operation.

Basch, Guibas, and Zhang (SoCG'97)~\cite{Basch:1997:PPM:262839.262998} used multidimensional range trees to maintain the closest pair. For a fixed dimension $d$, their KDS uses $O(n\log^{d-1}n)$ space and processes $O(n^2\beta_{2s+2}(n)\log n)$ events, each in worst-case time $O(\log^d n)$. Their KDS is responsive, efficient, compact, and local.

Agarwal, Kaplan, and Sharir (TALG'08)~\cite{Agarwal:2008:KDD:1435375.1435379} gave KDS's for both maintenance of the closest pair and all the nearest neighbors in $\mathbb{R}^d$. Agarwal~\etal~claimed that their closest pair KDS simplifies the certificates used by Basch, Guibas, and Hershberger~\cite{Basch:1997:DSM:314161.314435}; perhaps Agarwal~\etal~were not aware of the paper by Basch, Guibas, and Zhang~\cite{Basch:1997:PPM:262839.262998}, which independently presented a KDS for maintenance of the closest pair with the same approach to~\cite{Basch:1997:PPM:262839.262998}. The closest pair KDS by Agarwal~\etal, which supports insertions and deletions of points, uses $O(n\log^{d-1} n)$ space and processes $O(n^2\beta_{2s+2}(n)\log n)$ events, each in amortized time $O(\log^d n)$; this KDS is efficient, responsive (in an amortized sense), local, and compact. 
Agarwal~\etal~gave the first efficient KDS to maintain all the nearest neighbors in $\mathbb{R}^d$. For the efficiency of their KDS, they implemented multidimensional range trees by using randomized search trees (treaps). Their randomized kinetic approach uses $O(n\log^d n)$ space and processes $O(n^2\beta_{2s+2}^2(n)\log^{d+1} n)$ events; the expected time to process all events is $O(n^2\beta_{2s+2}^2(n)\log^{d+2} n)$. On average, each point in their KDS participates in $O(\log^d n)$ certificates. Their all nearest neighbors KDS is efficient, responsive (in an amortized sense), compact, but in general is not local. 

Rahmati, King, and Whitesides (SoCG'13)~\cite{socg17-rahmati} gave the first KDS for maintenance of the Semi-Yao graph in $\mathbb{R}^2$. Their Semi-Yao graph KDS uses linear space and processes $O(n^2\beta_{2s+2}(n))$ events with total processing time $O(n^2\beta_{2s+2}(n)\log n)$. Using the kinetic Semi-Yao graph, they improved the previous KDS by Agarwal~\etal~to maintain all the nearest neighbors in $\mathbb{R}^2$. In particular, their \textit{deterministic} kinetic algorithm, which is also arguably simpler than the randomized kinetic algorithm by Agarwal~\etal, uses $O(n)$ space and processes $O(n^2\beta_{2s+2}^2(n)\log n)$ events with total processing time $O(n^2\beta_{2s+2}^2(n)\log^2 n)$. With the same complexity as their KDS for maintenance of all the nearest neighbors, they maintain the closest pair over time. On average, each point in their KDS's participates in a constant number of certificates. Their kinetic data structures for maintenance of the Semi-Yao graph, all the nearest neighbors, and the closest pair are efficient, responsive (in an amortized sense), compact, but in general are not local.  
\vspace{+5pt}
\\
\textbf{Our technique and improvements.}
We provide a simple, deterministic KDS for maintenance of both the Semi-Yao graph and all the nearest neighbors in $\mathbb{R}^d$.  Assuming $d$ is fixed, we maintain the Semi-Yao graph in $\mathbb{R}^d$ using a constant number of range trees. Our KDS generalizes the previous KDS for the Semi-Yao graph by Rahmati~\etal~\cite{socg17-rahmati} that only works in $\mathbb{R}^2$. Also, our kinetic approach yields improvements of the KDS for maintenance of the Semi-Yao graph by Rahmati~\etal~\cite{socg17-rahmati} (see Table~\ref{table:RelatedWork} in Appendix A): Our KDS is local, but their KDS is not. In particular, each point in our KDS participates in $O(1)$ certificates, but  in their KDS each point participates in $O(n)$ events. For any fixed dimension $d$, our KDS handles $O(n^2)$ events, but their KDS handles $O(n^2\beta_{2s+2}(n))$ events in $\mathbb{R}^2$. 

Our KDS for maintenance of all the nearest neighbors is based on the fact that the Semi-Yao graph is a supergraph of the nearest neighbor graph. For each point $p$ in the Semi-Yao graph we construct a tournament tree to maintain the edge with minimum length among the edges incident to the point $p$. Summing over elements of all the tournament trees in our KDS is linear in $n$, which leads to a total number of events $O(n^2\beta_{2s+2}^2(n)\log n)$, which is \textit{independent} of $d$. This improves the previous \textit{randomized} kinetic algorithm by Agarwal~\etal~\cite{Agarwal:2008:KDD:1435375.1435379}: The expected total size of the tournament trees in their KDS is $O(n^2\beta_{2s+2}(n)\log^dn)$; so their KDS processes $O(n^2\beta_{2s+2}^2(n)\log^{d+1} n)$ events, which depends on $d$. Also, the structure and analysis by Agarwal~\etal~are more complex than ours.
\vspace{+5pt}
\\
\textbf{Our results.}
For a set of $n$ moving points in $\mathbb{R}^d$, we present simple KDS's to maintain the Semi-Yao graph and all the nearest neighbors. We assume the trajectory of each point is an algebraic function of at most constant degree $s$. Our KDS for maintenance of the Semi-Yao graph uses $O(n\log^d n)$ space and processes $O(n^2)$ events with total processing time $O(n^2\beta_{2s+2}(n)\log^{d+1} n)$.  The KDS is compact, efficient, responsive (in an amortized sense), and it is local.

Our all the nearest neighbors KDS uses $O(n\log^d n)$ space and processes $O(n^2\beta_{2s+2}^2(n)\log n)$) events with total processing time $O(n^2\beta_{2s+2}(n)\log^{d+1} n)$.  It is compact, efficient, responsive (in an amortized sense), but it is not local in general. To satisfy the locality criterion and to get a worst-case processing time KDS, we show a KDS for maintenance of all the $(1+\epsilon)$-nearest neighbors. In particular, for each point $p$ we maintain some point $\hat{q}$ such that $|p\hat{q}|<(1+\epsilon).|pq|$, where $q$ is the nearest neighbor of $p$ and $|pq|$ is the Euclidean distance between $p$ and $q$. This KDS uses $O(n\log^{d} n)$ space, and handles $O(n^2\log^d n)$ events, each in worst-case time $O(\log^d n\log\log n)$; it is compact, efficient, responsive, and local.
\vspace{+5pt}
\\
\textbf{Paper organization.}
Section~\ref{sec:prelininary} describes the construction of the Semi-Yao graph and gives a solution to the all nearest neighbors problem in $\mathbb{R}^d$. In Section~\ref{sec:KineticSY}, we show how the Semi-Yao graph can be maintained kinetically. Using the kinetic Semi-Yao graph, we give a KDS for maintenance of all the nearest neighbors in Section~\ref{sec:KineticANN}. Section~\ref{sec:KineticEpsANN} shows how to maintain all the $(1+\epsilon)$-nearest neighbors.
\section{The Construction}\label{sec:prelininary}
In the following we describe the construction of the Semi-Yao graph and construction of all the nearest neighbors, which will aid in understanding how our kinetic approach works.

Let $\overrightarrow{v}$ be a unit vector in $\mathbb{R}^d$ with apex at the origin $o$, and let $\theta$ be a constant. We define the \textit{infinite right circular cone} with respect to $\overrightarrow{v}$ and $\theta$ to be the set of points $x\in \mathbb{R}^d$ such that the angle between $\overrightarrow{ox}$ and $\overrightarrow{v}$ is at most $\theta/2$. A \textit{polyhedral cone} inscribed in this infinite right circular cone is formed by intersection of $d$ distinct half-spaces such that all the half-spaces contain the origin $o$. The angle between any two rays inside the polyhedral cone emanating from $o$ is at most $\theta$. 
The $d$-dimensional space around the origin $o$ can be covered by a collection of disjoint polyhedral cones $C_1,...,C_c$, where $c=O(1/\theta^{d-1})$~\cite{Agarwal:2008:KDD:1435375.1435379,Abam:2011:KSX:1971362.1971367}. Denote by $x_l$ the vector in the direction of the unit vector $\overrightarrow{v}$ of $C_l$, $1\leq l\leq c$,  with the origin at $o$. Let $C_l(p)$ denote a translated copy of $C_l$ with apex at $p$; see Figure~\ref{fig:SY6}(a). From now on, we assume $d$ is arbitrary but fixed, so $c$ is constant.

Given a point set $P$ in $\mathbb{R}^d$, the Semi-Yao graph is constructed by connecting each point $p\in P$ to the point in $P\cap C_l(p)$, $1\leq l\leq c$, whose $x_l$-coordinate is minimum. Figure~\ref{fig:SY6}(b) depicts some edges incident to the point $p$ in the Semi-Yao graph in $\mathbb{R}^2$ where $\theta=\pi/3$; here $x_1=-x_4$, $x_2=-x_5$, and $x_3=-x_6$.

The following lemma is used in~\cite{Basch:1997:DSM:314161.314435,Agarwal:2008:KDD:1435375.1435379,socg17-rahmati} to maintain the closest pair and all the nearest neighbors for a set $P$ of moving points; see Figure~\ref{fig:SY6}(c).
\begin{lemma}{\tt (Lemma 8.1.~\cite{Agarwal:2008:KDD:1435375.1435379})}\label{the:keyLemma}
Let $p$ be the nearest point to $q$ and let $C_l(p)$ be a cone of opening angle $\theta\leq \pi/3$ that contains $q$. Then $q$ has the minimum $x_l$-coordinate among the points in $P\cap C_l(p)$.
\end{lemma}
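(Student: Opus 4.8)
The plan is to argue by contradiction. Suppose $q$ does not attain the minimum $x_l$-coordinate in $P\cap C_l(p)$; then there is a point $q'\in P\cap C_l(p)$ with $x_l(q')<x_l(q)$. I will show that this forces $|qq'|<|pq|$, so that $q'$ is strictly closer to $q$ than $p$ is, contradicting the hypothesis that $p$ is the nearest point to $q$. Hence no such $q'$ exists and $q$ has the minimum $x_l$-coordinate among the points in $P\cap C_l(p)$.

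To set up the main computation, place the apex $p$ at the origin and orient the cone axis $\overrightarrow{v}_l$ along a coordinate direction, so that for any $z\in C_l(p)$ the value $x_l(z)$ equals the projection $|pz|\cos\alpha_z$, where $\alpha_z\le\theta/2$ is the angle between $\overrightarrow{pz}$ and the axis. Writing $\alpha=\alpha_q$, $\alpha'=\alpha_{q'}$, and $\gamma=\angle qpq'$, the three rays from $p$ form a spherical triangle on the unit sphere, whence $\gamma\le\alpha+\alpha'\le\theta\le\pi/3$, and the whole argument takes place in the plane through $p,q,q'$. By the law of cosines in triangle $pqq'$, the target inequality $|qq'|<|pq|$ is equivalent to $|pq'|<2|pq|\cos\gamma$. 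The projection hypothesis $x_l(q')<x_l(q)$ gives $|pq'|\cos\alpha'<|pq|\cos\alpha$, so it suffices to check $\cos\alpha\le 2\cos(\alpha+\alpha')\cos\alpha'$; expanding the product and simplifying, this reduces to $\alpha+2\alpha'\le\pi/2$, which holds since $\alpha,\alpha'\le\theta/2\le\pi/6$.

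The crux is this last geometric inequality. The naive intuition that ``smaller axis-projection implies being closer to the apex'' is false once the cone is wide, so the argument must use the hypothesis $\theta\le\pi/3$ in an essential way; indeed the condition $\alpha+2\alpha'\le\pi/2$ is tight precisely at $\theta=\pi/3$ (attained when $\alpha=\alpha'=\pi/6$), so the bound on the opening angle cannot be relaxed. The only other point requiring care is the passage to the plane through $p,q,q'$ together with the spherical-triangle bound $\gamma\le\alpha+\alpha'$, which is strictly sharper than $\gamma\le\theta$; the weaker bound does not suffice, since one checks that $\cos\alpha\le 2\cos\theta\cos\alpha'$ already fails for $\alpha=0$, $\alpha'=\theta/2$ at $\theta=\pi/3$.
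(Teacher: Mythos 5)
The paper itself contains no proof of this lemma --- it is imported verbatim as Lemma~8.1 of Agarwal, Kaplan, and Sharir, and is stated here with only a citation --- so your argument has to stand on its own, and it does. The contradiction setup is sound; the law-of-cosines reduction $|qq'|<|pq| \Leftrightarrow |pq'|<2|pq|\cos\gamma$ is correct (the difference of squares factors as $|pq'|\bigl(|pq'|-2|pq|\cos\gamma\bigr)$); and the product-to-sum identity $2\cos(\alpha+\alpha')\cos\alpha'=\cos\alpha+\cos(\alpha+2\alpha')$ makes your reduction to $\alpha+2\alpha'\le\pi/2$ exact rather than merely sufficient, with that bound immediate from $\alpha,\alpha'\le\theta/2\le\pi/6$. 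Strictness survives the whole chain because the hypothesis $x_l(q')<x_l(q)$ is strict and $\cos\alpha'>0$, $\cos\gamma\ge\cos(\pi/3)>0$. Your two cautionary remarks are also correct and worth keeping: the spherical-triangle bound $\gamma\le\alpha+\alpha'$ is genuinely needed (the weaker $\gamma\le\theta$ does not close the inequality, as your $\alpha=0$, $\alpha'=\theta/2$ check shows), and it relies on the polyhedral cone being inscribed in the circular cone of half-angle $\theta/2$ --- a property the paper's definition of $C_l$ does guarantee; likewise the tightness at $\alpha=\alpha'=\pi/6$ confirms that $\theta\le\pi/3$ cannot be relaxed. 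Two small points of hygiene you should make explicit: $q'\ne q$ (immediate from the strict projection inequality), and $q'\ne p$, i.e., the apex is not counted as a member of $P\cap C_l(p)$ --- otherwise the lemma is false as literally stated, since $x_l(p)<x_l(q)$ always; the same convention disposes of the degenerate $|pq'|=0$ case needed for your division step.
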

\begin{figure}[t]
\centering
\includegraphics[scale=0.7]{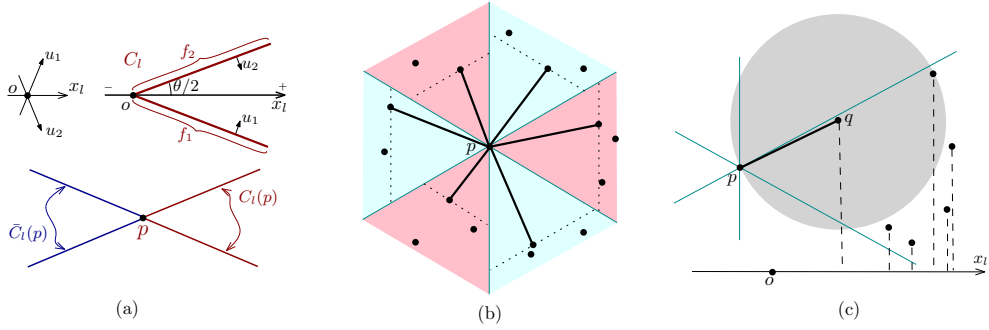}
\vspace{-5pt}
\caption{\small (a) The cone $C_l$ and its translated copy with apex at $p$. (b) The point $p$ is connected to the point in $P\cap C_l(p)$ that has minimum $x_l$-coordinate. The dotted lines are orthogonal to the cone axes. (c) The point $p$ is the nearest neighbor to $q$ and, so $q$ has the minimum $x_l$-coordinate among the points in $P\cap C_l(p)$.}
\vspace{-12pt}
\label{fig:SY6}
\end{figure}
For a set of points in the plane, Rahmati~\etal~\cite{socg17-rahmati} used Lemma~\ref{the:keyLemma} to show that the Semi-Yao graph is a super-graph of the nearest neighbor graph. It is easy to see the same result for a set of points in higher dimensions.
\begin{lemma}\label{the:SYcontainsNNG}
The Semi-Yao graph of a set of points in $\mathbb{R}^d$ is a super-graph of the nearest neighbor graph.
\end{lemma}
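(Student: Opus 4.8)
The plan is to show that every edge of the nearest neighbor graph appears as an edge of the Semi-Yao graph. Since both graphs share the same vertex set $P$, this containment of edge sets is exactly the super-graph claim. So I would fix an arbitrary edge of the nearest neighbor graph. By definition this edge joins some point $q$ to its nearest neighbor $p$; that is, $p$ is the nearest point to $q$ in $P$. The goal is then to exhibit a cone that witnesses $\{p,q\}$ as a Semi-Yao edge.

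First I would invoke the covering property of the cones established in Section~\ref{sec:prelininary}: the translated cones $C_1(p),\dots,C_c(p)$ with common apex $p$ cover all of $\mathbb{R}^d$. Hence $q$ lies in at least one such cone; call it $C_l(p)$. I would then apply Lemma~\ref{the:keyLemma} to this cone. Since $p$ is the nearest point to $q$, since $q\in C_l(p)$, and since the opening angle satisfies $\theta\leq\pi/3$, the lemma guarantees that $q$ has the minimum $x_l$-coordinate among all points of $P\cap C_l(p)$.

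To finish, I would recall the construction rule for the Semi-Yao graph: the point $p$ is joined to the point of $P\cap C_l(p)$ whose $x_l$-coordinate is minimum. By the previous step that point is exactly $q$, so the edge $\{p,q\}$ belongs to the Semi-Yao graph. As the nearest-neighbor edge was arbitrary, every such edge lies in the Semi-Yao graph, which establishes the super-graph relation.

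Honestly, the heavy lifting is done entirely by Lemma~\ref{the:keyLemma}, so the only points requiring care are bookkeeping ones rather than a genuine obstacle. I would state explicitly that the Semi-Yao graph is built with cones of opening angle at most $\theta\leq\pi/3$, so that the hypothesis of Lemma~\ref{the:keyLemma} is met. I would also note that the argument is insensitive to tie-breaking --- both for equidistant nearest neighbors and for points sharing the minimum $x_l$-coordinate --- as long as the conventions used in the two constructions are compatible; under any fixed consistent convention the witnessing cone still selects $q$. Thus the proof is essentially a direct transcription of Lemma~\ref{the:keyLemma} into the definition of the Semi-Yao graph, and the main ``difficulty'' is merely verifying that the angle condition and the cone-covering property are in place.
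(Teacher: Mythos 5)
Your proof is correct and follows exactly the paper's argument: take a nearest-neighbor edge $(p,q)$ with $p$ nearest to $q$, locate $q$ in some cone $C_l(p)$ via the covering property, and apply Lemma~\ref{the:keyLemma} (using $\theta\leq\pi/3$) to conclude that $q$ has the minimum $x_l$-coordinate in $P\cap C_l(p)$, hence $(p,q)$ is a Semi-Yao edge. Your added remarks on tie-breaking are a harmless refinement the paper leaves implicit.
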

\begin{proof}
Let $(p,q)$ be an edge in the nearest neighbor graph such that $p$ is the nearest neighbor to $q$. The point $q$ is in some cone $C_l(p)$. Following the restriction $\theta\leq \pi/3$ of the cone $C_l(p)$ and from Lemma~\ref{the:keyLemma}, the point $q$ has minimum length projection on $x_l$ among the points in $P\cap C_l(p)$; this means that $(p,q)$ is an edge of the Semi-Yao graph.
\end{proof}

For a fixed dimension $d$, there is a constant number of cones $C_l$. Denote by $f_1,...,f_d$ the bounding half-spaces of the cone $C_l$ and let $u_i$ be the coordinate axis orthogonal to $f_i$, $1\leq i\leq d$; Figure~\ref{fig:SY6}(a) depicts $u_1$ and $u_2$ of the half-spaces $f_1$ and $f_2$ of the cone $C_l$. Corresponding to each cone $C_l$, we construct a ranked-based range tree (RBRT) $T_l$~\cite{Abam:2011:KSX:1971362.1971367}, which describes as follows, and for each point $p$ in ${\cal T}_l$, we find the point in $P\cap C_l(p)$ whose $x_l$-coordinate is minimum; this gives a construction for the Semi-Yao graph. 
\\
\textbf{Ranked-based range tree.} The RBRT $T_l$ is a variant of the range trees which has the following property. When two points exchange their order along an axis $u_i$, the RBRT can be updated without rebalancing the subtrees. The points at the level $i$ of the RBRT $T_l$ are sorted at the leaves in ascending order according to their $u_i$-coordinates. The skeleton of a RBRT is independent of the position of the points in $\mathbb{R}^d$ and it depends on the ranks of the points in each of the $u_i$-coordinates. The rank of a point in a tree at level $i$ of the RBRT is its position in the sorted list of all the points according to their $u_i$-coordinates. Any tree at any level of the RBRT is a balanced binary tree and no matter how many points are in the tree, it is a tree on $n$ ranks~\cite{Abam:2011:KSX:1971362.1971367}.

Let $v$ be an internal node at level $d$ of the RBRT $T_l$. Denote by $R(v)$ the set of points at the leaves of the subtree rooted at $v$.  The set $P\cap C_l(p)$ is the union of $O(\log^d n)$ sets $R(.)$; all of these $O(\log^d n)$ sets can be reported in time $O(\log^d n+k)$, where $k$ is the cardinality of $P\cap C_l(p)$. 

Corresponding to each node $v$ at level $d$ of $T_l$ we define another set $B(v)$. Denote by ${\cal P}_p$ the path from the parent of $p$ to the root of a tree at level $d$ of $T_l$. A point $p$ belongs to $B(v)$ if $v$ is the right child of some node $\bar{v}\in {\cal P}_p$; a point $p$ is in $B(v)$ if $R(v)$ is one of the $O(\log^d n)$ sets while reporting the points of $P\cap C_l(p)$.

Let $\bar{C}_l(p)=-C_l(p)$ be the reflection of $C_l(p)$ through $p$; $\bar{C}_l(p)$ is formed by following the lines through $p$ in the half-spaces of $C_l(p)$; see Figure~\ref{fig:SY6}(a). Similar to the way that we report the points of $P$ inside a query cone $C_l(q)$, we can also report the points of $P$ inside a query cone $\bar{C}_l(q)$. The set $P\cap \bar{C}_l(q)=\cup_v B(v)$, where the nodes $v$ are on the paths ${\cal P}_q$.


The set of all the pairs $(B(v),R(v))$, for all of the internal nodes $v$ at level $d$ of ${\cal T}_l$, is called a \textit{cone separated pair decomposition} (CSPD) for $P$ with respect to $C_l$; denote this set by $\Psi_{{C}_l}=\{(B_1,R_1),...,(B_m,R_m)\}$. The CSPD $\Psi_{{C}_l}$ for $P$ with respect to $C_l$ has the following properties~\cite{Abam:2011:KSX:1971362.1971367}:
\begin{itemize}
\item For the two points $p\in P$ and $q\in P$, where $q\in C_l(p)$, there exists a unique pair $(B_i,R_i)\in \Psi_{{C}_l}$ such that $p\in B_i$ and $q\in R_i$.
\item For the pair $(B_i,R_i)\in \Psi_{{C}_l}$, if $p\in B_i$ and $q\in R_i$, then $q\in C_l(p)$ and $p\in \bar{C}_l(q)$.
\end{itemize}
Let $r(v)$ be the point with minimum $x_l$-coordinate among the points in $R(v)$. Denote by $lc(v)$ and $rc(v)$ the left and the right child of the node $v$, respectively.  For each node $v$, the value of $r(v)$ is generated from the values of it children, $r(lc(v))$ and $r(rc(v))$; one which stores the point with minimum $x_l$-coordinate. So, for all internal nodes $v$ at level $d$ of the RBRT $T_l$, we can find all of the $r(v)$ in $O(n\log^{d-1} n)$ time. Since for each point $p\in P$ the point with minimum $x_l$-coordinate in $P\cap C_l(p)$ is chosen among $O(\log^d n)$ points $r(.)$, the following lemma results.
\begin{lemma}\label{the:SYConstructionTime}
The Semi-Yao graph of a set of $n$ points in $\mathbb{R}^d$ can be constructed in time $O(n\log^d n)$.
\end{lemma}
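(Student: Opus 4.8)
The plan is to establish the time bound by carefully accounting for the cost of building the $c$ ranked-based range trees and then extracting, for each cone $C_l$ and each point $p$, the minimum-$x_l$-coordinate point in $P\cap C_l(p)$. Since $c = O(1/\theta^{d-1})$ is constant once $d$ is fixed, it suffices to bound the work for a single cone $C_l$ and its associated tree $T_l$, and then multiply by the constant $c$.

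First I would bound the cost of constructing a single RBRT $T_l$. This is a $d$-level range tree on $n$ points, built according to the $u_1,\dots,u_d$ coordinate orderings; the standard construction (sorting along each axis and recursively building the associated structures) takes $O(n\log^{d-1} n)$ time and $O(n\log^{d-1} n)$ space, and the rank-based skeleton does not change this asymptotic cost. Doing this for all $c$ cones keeps the total in $O(n\log^{d-1} n)$.

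Next I would augment the tree to compute, at every internal node $v$ of the level-$d$ trees, the value $r(v)$, the point of $R(v)$ with minimum $x_l$-coordinate. As the excerpt already observes, $r(v)$ is obtained in $O(1)$ time by comparing $r(lc(v))$ and $r(rc(v))$, so a single bottom-up pass computes all $r(v)$ over all the level-$d$ trees in $T_l$ in time $O(n\log^{d-1} n)$. The crux of the argument, and the step I expect to be the main obstacle in terms of careful bookkeeping, is the final extraction: for each point $p\in P$ we must identify the point of minimum $x_l$-coordinate in $P\cap C_l(p)$. By the reporting property of the RBRT, $P\cap C_l(p)$ is the disjoint union of $O(\log^d n)$ canonical sets $R(v)$, so the desired minimum is the minimum-$x_l$ point among the corresponding $O(\log^d n)$ precomputed values $r(v)$. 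Finding these $O(\log^d n)$ nodes amounts to a standard range-tree search, which costs $O(\log^d n)$ time per point, and then taking the minimum over the associated $r(v)$ values costs another $O(\log^d n)$.

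Summing over all $n$ points gives $O(n\log^d n)$ for the extraction phase, which dominates the $O(n\log^{d-1} n)$ construction and augmentation costs; multiplying by the constant number $c$ of cones leaves the bound unchanged. Hence the Semi-Yao graph is obtained by taking, for each $p$ and each $l$, the edge $(p, r_l(p))$ to the reported minimum, yielding the claimed total running time of $O(n\log^d n)$. The only delicate point is verifying that the $O(\log^d n)$ canonical nodes per query are exactly those whose precomputed $r(v)$ must be inspected, which follows directly from the decomposition of $P\cap C_l(p)$ used in defining $B(v)$ and the CSPD.
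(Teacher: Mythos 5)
Your proposal is correct and follows essentially the same route as the paper, whose argument appears in the text just before the lemma: build the constant number of RBRTs, compute each $r(v)$ bottom-up in $O(n\log^{d-1} n)$ time, and then for each point $p$ take the minimum-$x_l$ point among the $O(\log^d n)$ precomputed values $r(\cdot)$ at the canonical nodes covering $P\cap C_l(p)$, for $O(\log^d n)$ per point and $O(n\log^d n)$ overall. Your version simply spells out the bookkeeping (tree construction cost, the range-tree search for the canonical nodes) that the paper leaves implicit.
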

Vaidya~\cite{Vaidya:1989:ONL:70530.70532} gave an $O(n\log n)$ time algorithm to solve the all nearest neighbors problem. Given the Semi-Yao graph in $\mathbb{R}^d$, by examining the edges incident to any point in the Semi-Yao graph, we can find the nearest neighbor to the point. Since the Semi-Yao graph has $O(n)$ edges, we can get the following.
\begin{lemma}\label{the:ANNConstructionTime}
Given the Semi-Yao graph, the all nearest neighbors problem in $\mathbb{R}^d$ can be solved in $O(n)$ time.
\end{lemma}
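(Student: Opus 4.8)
The plan is to combine the super-graph property of Lemma~\ref{the:SYcontainsNNG} with the linear bound on the number of edges of the Semi-Yao graph, reducing the all nearest neighbors problem to a single scan over the edge list.

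First I would establish correctness. For any point $p\in P$, let $q$ be its nearest neighbor. By Lemma~\ref{the:SYcontainsNNG} the nearest neighbor graph is a subgraph of the Semi-Yao graph, so the undirected edge between $p$ and $q$ appears in the Semi-Yao graph and is therefore incident to $p$. Hence, scanning the edges incident to $p$ and selecting the one of minimum Euclidean length correctly returns $q$. Doing this for every point solves the problem. The only point needing care is the direction of the certifying argument: in the proof of Lemma~\ref{the:SYcontainsNNG}, the edge witnessing that $q$ is the nearest neighbor of $p$ arises with $q$ as the cone apex and $p$ as the minimizer in $C_l(q)$; but since the Semi-Yao graph is undirected, the edge $\{p,q\}$ is incident to $p$ regardless of which endpoint plays the role of the apex, so the minimum-length incident edge of every point indeed reaches its true nearest neighbor.

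Next I would bound the running time. Since $d$ is fixed, the number of cones $c=O(1/\theta^{d-1})$ is constant, and each point $p$, viewed as an apex, contributes at most one edge per cone $C_l(p)$, $1\leq l\leq c$. Thus the number of (apex, target) pairs is at most $cn$, so the Semi-Yao graph has $O(n)$ edges and the total degree is $O(n)$. Concretely, I would initialize a best-distance value to $+\infty$ for each point, make a single pass over the edge list, and for each edge $\{p,q\}$ update the record (distance and stored neighbor) of both endpoints whenever $|pq|$ improves it. The initialization costs $O(n)$ and the pass touches each edge a constant number of times, giving $O(n)$ total time. There is essentially no hard obstacle here once the super-graph property and the $O(n)$ edge bound are in hand; the argument is a straightforward linear sweep.
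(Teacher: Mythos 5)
Your proposal is correct and follows essentially the same route as the paper, which likewise justifies the lemma by combining the super-graph property of Lemma~\ref{the:SYcontainsNNG} with the $O(n)$ bound on the number of Semi-Yao edges and then scanning the edges incident to each point for the minimum-length one. Your extra remark about the apex/minimizer roles being swapped in the certifying edge is a valid clarification of a point the paper leaves implicit (the graph being undirected makes it harmless), but it does not change the argument.
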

\section{Kinetic Semi-Yao graph}\label{sec:KineticSY}
Fix a cone $C_l$ and the corresponding coordinates $u_1,...,u_d$, and $x_l$. When the points are moving, the Semi-Yao graph remains unchanged as long as the order of the points in each of the coordinates $u_1,...,u_d$, and $x_l$ remains unchanged. To maintain the Semi-Yao graph over time, we distinguish between two types of events:
\begin{itemize}
\item \textit{$u$-swap event:} Such an event occurs if two points exchange their order in the $u_i$-coordinate. This event can change the structure of the range tree.
\item \textit{$x$-swap event:} This event occurs if two points exchange their order in the $x_l$-coordinate. The range tree structure remains unchanged when this event occurs.
\end{itemize}
To track the above changes, we maintain sorted lists $L(u_1),...,L(u_d)$, and $L(x_l)$ of the points in each of the coordinates $u_1,...,u_d$, and $x_l$, respectively. For each two consecutive points in each sorted list $L(u_i)$ we define a certificate that certifies the order of the two points in the $u_i$-coordinate. To track the closest time to the current time we put failure times of all of the certificates in a priority queue; the element with the highest priority in the queue gives the closest time.

Our Semi-Yao graph KDS is based on maintenance of the RBRT ${\cal T}_l$. Abam and de Berg describe how to maintain a RBRT. Their approach uses $O(n\log^d n)$ space, and a $u$-swap event can be handled in the worst-case time $O(\log^d n)$ without rebalancing the subtrees of the RBRT~\cite{Abam:2011:KSX:1971362.1971367}.

For the point $w\in P$, the set $P\cap C_l(w)=\bigcup_{j=1}^{j=k} R(v_j)$ where the nodes $v_j$ are the right child nodes of the nodes on the paths ${\cal P}_w$. Denote by $\ddot{w}_l$ the point in $P\cap C_l(w)$ with minimum $x_l$-coordinate. To maintain the Semi-Yao graph, for each point $w\in P$ we must track $\ddot{w}_l$ which in fact is the point in $\{r(v_1),...,r(v_k)\}$ whose $x_l$-coordinate is minimum. To apply changes to the $\ddot{w}_l$, for all $w\in P$, besides $r(v)$, we need to maintain more information at each internal node $v$ at level $d$ of the RBRT ${\cal T}_l$ that describes as follows.


Allocate an \textit{id} to each point in $P$. Let $B'(v)=\{(w,\ddot{w}_l)|~w\in B(v)\}$ and let $L(B'(v))$ be a sorted list of the pairs of $B'(v)$ according to the ids of the second components $\ddot{w}$ of the pairs $(w,\ddot{w}_l)$. This sorted list is used to answer the following query while processing $x$-swap events: Given a query point $p$, find all the points $w\in B(v)$ such that $\ddot{w}_l=p$. Since we have insertions/deletions into the sorted lists $L(B'(.))$ over time, we implement them using a dynamic binary search tree (\eg, a \textit{red-black tree}); each insertion/deletion operation is performed in worst-case time $O(\log n)$. Furthermore, we maintain a set of links between each point $w\in P$ and the pair $(w,\ddot{w}_l)$ in the sorted lists $L(B'(.))$ where $w\in B(.)$; denote this set by $Link(w)$. Since, the point $w$ is in at most $O(\log^d n)$ sets $B(.)$, the cardinality of the set $Link(w)$ is $O(\log^d n)$.

In the preprocessing step before the motion, for any internal node $v$ at level $d$ of the RBRT and for any point $w\in P$, we find $r(v)$ and $\ddot{w}_l$ and then, we construct $L(B'(v))$ and $Link(w)$.

Now, let the points move. The following shows how to maintain and reorganize $Link(.)$, $L(B'(.))$ and $r(.)$  when a $u$-swap event or an $x$-swap event occurs. Note that maintenance of the sets $Link(w)$, for all $w\in P$, gives a kinetic maintenance of the Semi-Yao graph.

\begin{wrapfigure}{r}{0.4\textwidth}
\vspace{-25pt}
  \begin{center}
    \includegraphics[width=0.35\textwidth]{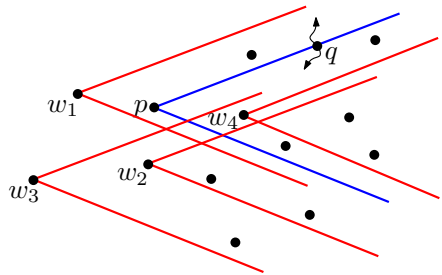}
  \end{center}
  \vspace{-10pt}
  \caption{\small A $u$-swap between $p$ and $q$ does not change the points in other cones $C_l(w_i)$.}
  \vspace{-15pt}
  \label{fig:Uswap}
\end{wrapfigure}
\textbf{$u$-swap event.}
When two points $p$ and $q$ exchange their order in the $u_i$-coordinate, we swap them in the sorted list $L(u_i)$ and update the invalid certificates with new valid ones; applying a constant number of changes to the priority queue takes $O(\log n)$ time. Then, we delete $p$ and $q$ and reinsert them into the RBRT with their new ranks~\cite{Abam:2011:KSX:1971362.1971367}. Next, we update the values $r(v)$ where the nodes $v$ are ancestors of $p$ and $q$. A change to some $r(v)$ can only change $r(v_{par})$ where $v_{par}$ is the parent of $v$. So, these updates can easily be done in $O(\log^d n)$ time.

Let $q\in C_l(p)$ (resp. $q\notin C_l(p)$) before the event. After occurring the event, $q$ moves outside (resp. inside) the cone $C_l(p)$; see Figure~\ref{fig:Uswap}. Note that this event does not change the points in cones $C_l(w)$ of other points $w\in P$. Therefore, the only change that can happen to the Semi-Yao graph is deleting an edge incident to $p$ inside the cone $C_l(p)$ and adding a new one. 

We perform the following steps when such event occurs. We first delete the pairs $(p,\ddot{p}_l)$ of the sorted lists $L(B'(.))$ where $p\in B(.)$; by using the links in $Link(p)$, this can be done in time $O(\log^{d+1})$. Then, we delete the members of $Link(p)$. Next, we find the point $\ddot{p}_l$ in $P\cap C_l(p)$ whose $x_l$-coordinate is minimum. Recall that $v_j$, $j=1,...,O(\log^d n)$, are the right child nodes of the nodes on the paths ${\cal P}_p$. Since we might get a new value for $\ddot{p}_l$ among all of the $r(v_j)$, we must add the new pair $(p,\ddot{p}_l)$, according to the id of the new value of $\ddot{p}_l$, into all the sorted lists $L(B'(v_j))$ where $p\in B(v_j)$. Finally, we construct $Link(p)$ of the new links between $p$ and the pair $(p,\ddot{p}_l)$ of the sorted lists $L(B'(v_j))$ which takes $O(\log^{d+1} n)$ time.

Since the number of swaps between the points in the sorted lists $L(u_i)$, $1\leq i \leq d$, is $O(n^2)$, the following results.
\begin{lemma}\label{the:Uswap}
For maintenance of the Semi-Yao graph, our KDS handles $O(n^2)$ $u$-swap events, each in the worst-case time $O(\log^{d+1} n)$.
\end{lemma}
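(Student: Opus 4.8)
The plan is to establish the two assertions of the lemma separately: first that the total number of $u$-swap events is $O(n^2)$, and then that each such event is processed in worst-case time $O(\log^{d+1} n)$.

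For the count, I would argue pairwise. Fix a cone $C_l$ and one of its coordinate axes $u_i$. For a fixed pair of points $p,q \in P$, the quantity $u_i(p,\cdot)-u_i(q,\cdot)$ is a fixed linear combination of the coordinate trajectories of $p$ and $q$, hence an algebraic function of time of degree at most $s$; it therefore vanishes at most $s$ times, so $p$ and $q$ change their order in $L(u_i)$ at most $s$ times. Summing over all pairs of points (there are $O(n^2)$ of them) gives $O(sn^2)$ swaps in the single list $L(u_i)$. Since $d$ and the number of cones $c$ are constants (as $d$ is fixed), summing over the $c$ cones and the $d$ lists $L(u_1),\dots,L(u_d)$ per cone leaves the bound at $O(n^2)$. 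Note that this step needs only root counting, not Davenport--Schinzel bounds, which is why no $\beta$ factor appears here.

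For the per-event cost, I would tally the work described in the update procedure. Swapping $p,q$ in $L(u_i)$ and repairing the constant number of affected certificates in the priority queue costs $O(\log n)$. Deleting and reinserting $p$ and $q$ into the RBRT ${\cal T}_l$ with their new ranks costs $O(\log^d n)$ and needs no rebalancing, by the construction of Abam and de Berg~\cite{Abam:2011:KSX:1971362.1971367}. Refreshing the values $r(v)$ is also $O(\log^d n)$: the points $p$ and $q$ together have $O(\log^d n)$ ancestors in ${\cal T}_l$, a change to $r(v)$ only propagates to its parent, and each node's value is recomputed from its two children in $O(1)$. The dominant term comes from the $B'$-lists: because a point lies in $O(\log^d n)$ sets $B(\cdot)$, deleting the old pairs $(p,\ddot{p}_l)$ through $Link(p)$, recomputing $\ddot{p}_l$ as the minimum $x_l$-coordinate among the $O(\log^d n)$ values $r(v_j)$, and inserting the new pairs and rebuilding $Link(p)$ each touch $O(\log^d n)$ red--black trees at cost $O(\log n)$ apiece, for a total of $O(\log^{d+1} n)$. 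Adding these contributions yields the claimed $O(\log^{d+1} n)$ worst-case bound.

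The hard part, and the step I would be most careful about, is justifying that the update is this local --- that only a single point's cone contents can change, so that only $O(\log^d n)$ $B'$-entries (those of the affected apex, not of all points) need repair. The key geometric observation is that a swap of $p$ and $q$ in $u_i$ is exactly the vector $q-p$ crossing the hyperplane $u_i=0$, which supports the bounding half-space $f_i$ of $C_l$. At the crossing moment $q-p$ can lie on $\partial C_l$ or on $\partial\bar{C}_l=\partial(-C_l)$, but, since the cone opening angle is at most $\pi/3$ and $q\neq p$, it cannot lie on both at once; hence at most one of the memberships $q\in C_l(p)$ and $p\in C_l(q)$ toggles. Moreover, for every third point $w$ the set $P\cap C_l(w)$, as well as the relations $p\in C_l(w)$ and $q\in C_l(w)$, are untouched, since only the relative order of $p$ and $q$ changed (see Figure~\ref{fig:Uswap}). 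Consequently at most one point's Semi-Yao edge inside $C_l$ is deleted and one added, and the repair is confined to that point's $\ddot{p}_l$ value and its $B'$-entries. Once this locality is in hand, the cost accounting above is routine and the lemma follows; the only extra care is to run the repair on whichever of $p,q$ is the affected apex (the procedure as written treats $p$ as that apex).
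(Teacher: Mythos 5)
Your proposal is correct and follows essentially the same route as the paper: the $O(n^2)$ bound comes from pairwise root-counting of constant-degree trajectory differences over a constant number of cones and axes, and the $O(\log^{d+1} n)$ worst-case cost is the same tally the paper uses ($O(\log n)$ for the certificate/queue repair, $O(\log^d n)$ for the RBRT reinsertion and the $r(v)$ propagation, and $O(\log^{d+1} n)$ for updating the $O(\log^d n)$ red--black trees $L(B'(\cdot))$ and $Link(p)$). Your geometric justification of locality --- that $q-p$ crossing the hyperplane $u_i=0$ can lie on $\partial C_l$ or $\partial(-C_l)$ but not both, so only one apex's cone content toggles while all cones $C_l(w)$ of third points are unaffected --- is a welcome rigorous filling-in of what the paper merely asserts with reference to its Figure~\ref{fig:Uswap}.
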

\textbf{$x$-swap event.} Let $p$ and $q$ be two consecutive points with $p$ preceding $q$ in the sorted list $L(x_l)$: $x_l(p)<x_l(q)$ before the event. When $p$ and $q$ exchange their order, we swap them in $L(x_l)$ and update the invalid certificates with new valid ones, which takes $O(\log n)$ time. This event does not change the structure of the RBRT $T_l$; but it might change the second components of the pairs in some sorted lists $L(B'(.))$ and if so, we must apply the changes to the Semi-Yao graph. 

\begin{wrapfigure}{r}{0.4\textwidth}
\vspace{-20pt}
  \begin{center}
    \includegraphics[width=0.43\textwidth]{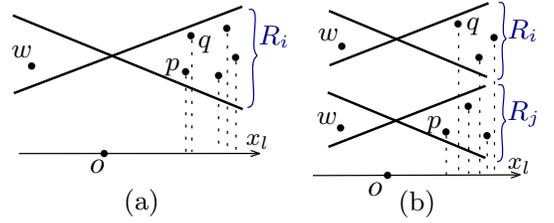}
  \end{center}
  \vspace{-10pt}
  \caption{\small Two cases when an $x$-swap between $p$ and $q$ occurs.}
  \vspace{-20pt}
  \label{fig:Xswap}
\end{wrapfigure}
The number of all changes to the Semi-Yao graph depends on how many points $w\in P$ have both of $p$ and $q$ in their cones $C_l(w)$. While reporting the points in $P\cap C_l(w)$, such points $w$ can have both of $p$ and $q$ in the same set $R_i$ (Figure~\ref{fig:Xswap}(a)) or in two different sets $R_i$ and $R_j$ (Figure~\ref{fig:Xswap}(b)). To find such points $w$, $(i)$ we seek internal nodes $v_{pq}$ at level $d$ of ${\cal T}_l$ where $\{p,q\}\subseteq R(v_{pq})$, which means that the nodes $v_{pq}$ are common ancestors of $p$ and $q$, $(ii)$ we seek for internal nodes $v_p$ and $v_q$ where $p\in R(v_p)$ and $q\in R(v_q)$. In the first case, it is obvious that we must find any point $w\in B(v_{pq})$ such that $p$ is the point with minimum $x_l$-coordinate in the cone $C_l(w)$, meaning that $\ddot{w}_l=p$. Then, we replace $p$ by $q$ after the event ($\ddot{w}_l=q$). This means that we replace the edge $wp$ of the Semi-Yao graph with $wq$. 

Note that in the second case there is no point $w'\in B(v_p)$ such that $\ddot{w'}_l=q$, because $x_l(p)<x_l(q)$. Also note that if there is a point $w''\in B(v_p)$ such that $\ddot{w''}_l=p$, we change the value of $\ddot{w''}_l$ to $q$ if $q\in C_l(w'')$. Since we can find $w''$ in $B(v_q)$, we do not need to be worry about checking whether such points $w''$ are in $B(v_p)$ or not. Therefore, for the second case, we only need to check whether there is a point $w\in B(v_q)$ such that $\ddot{w}_l=p$; if so, we change the value of $\ddot{w}_l$ to $q$ ($\ddot{w}_l=q$).

From the above discussion, the following three steps summarize the update mechanism of our KDS for maintenance of the Semi-Yao graph when an $x$-swap event occurs.
\begin{itemize}
\item[1.] Find all the internal nodes $v$ at level $d$ of $T_l$ such that $\{p,q\}\subseteq R(v)$ and $r(v)=p$; see Figure~\ref{fig:Xswap}(a). Also, find all the internal nodes $v$ where $r(v)=q$; see Figure~\ref{fig:Xswap}(b).
\item[2.] For each of the internal nodes $v$ (from Step 1), find all the pairs $(w,\ddot{w}_l)$ in the sorted list $L(B'(v))$ where $\ddot{w}_l=p$.
\item[3.] For each $w$ (from Step 2), through the links in $Link(w)$, find all the corresponding sorted lists $L(B'(.))$, delete the pair $(w,\ddot{w}_l)$ from them, change the value of the second component $\ddot{w}_l$ to $q$, and add $(w,\ddot{w}_l)$ into the sorted lists according to the id of $q$.
\end{itemize}
The number of edges incident to a point $p$ in the Semi-Yao graph is $O(n)$. So, when an $x$-swap event between $p$ and some point $q$ occurs, it might apply $O(n)$ changes to the Semi-Yao graph. The following lemma shows that an $x$-swap event can be handled in polylogarithmic amortized time.
\begin{lemma}\label{the:Xswap}
For maintenance of the Semi-Yao graph, our KDS handles $O(n^2)$ $x$-swap events with total processing time $O(n^2\beta_{2s+2}(n)\log^{d+1} n)$.
\end{lemma}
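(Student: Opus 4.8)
The plan is to bound the number of $x$-swap events and the total work they create separately, the latter being governed by the number of edge changes of the Semi-Yao graph.

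First I would count the events. The relative $x_l$-order of a pair $p,q$ flips exactly when $x_l(p)-x_l(q)$ changes sign; this is an algebraic function of time of degree at most $s$, so it vanishes at most $s=O(1)$ times. Hence each pair reverses its $x_l$-order $O(1)$ times, and since an $x$-swap is an adjacent transposition in $L(x_l)$ that happens only at the instant a reversing pair becomes adjacent, the number of $x$-swap events is $O(sn^2)=O(n^2)$.

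Next I would charge the cost of one event. Step~1 visits only the common ancestors of $p$ and $q$ and the $O(\log^d n)$ nodes $v$ with $q\in R(v)$, so it costs $O(\log^d n)$; Step~2 searches each such node's list $L(B'(v))$ in $O(\log n)$ time plus output, a fixed cost of $O(\log^{d+1} n)$ per event. Every reported pair is a point $w$ whose cone minimum $\ddot{w}_l$ must change from $p$ to $q$, and Step~3 updates such a $w$ through its $O(\log^d n)$ links $Link(w)$, each an $O(\log n)$ operation, for a variable cost of $O(\log^{d+1} n)$ per change of some $\ddot{w}_l$. Summing the fixed part over the $O(n^2)$ events gives $O(n^2\log^{d+1} n)$, so the running time is $O\big((n^2+N)\log^{d+1} n\big)$, where $N$ is the total number of changes to the values $\ddot{w}_l$ caused by $x$-swaps over the whole motion.

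The heart of the proof, and the step I expect to be the main obstacle, is showing $N=O(n^2\beta_{2s+2}(n))$. I would fix a point $w$ and consider the function $t\mapsto\min\{x_l(r):r\in P\cap C_l(w)\}$, whose breakpoints are precisely the changes of $\ddot{w}_l$. This is the lower envelope of the $x_l$-trajectories of the other points, but each trajectory is active only while its point lies in $C_l(w)$. Membership of a point $r$ in $C_l(w)$ is a conjunction of $d$ sidedness tests, each controlled by the sign of a degree-$s$ function $u_i(r)-u_i(w)$, so $r$ enters or leaves the cone only $O(1)$ times; the envelope is therefore that of $O(n)$ partial functions, each a piece of a degree-$s$ polynomial, any two of which cross at most $s$ times. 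By the theory of Davenport--Schinzel sequences such a lower envelope has complexity $O(\lambda_{s+2}(n))=O(n\beta_{s+2}(n))$, which is $O(n\beta_{2s+2}(n))$ since $\beta_{s+2}(n)\le\beta_{2s+2}(n)$; the interval endpoints correspond to $u$-swap events, so the interior breakpoints counted here are at most this many. Summing over the $n$ choices of $w$ gives $N=O(n^2\beta_{2s+2}(n))$, and combining with the previous paragraph yields the claimed total of $O(n^2\beta_{2s+2}(n)\log^{d+1} n)$. The delicate point is that the set $P\cap C_l(w)$ changes in time, forcing the use of the envelope bound for partial functions and a clean separation of its interior breakpoints (this lemma) from its interval endpoints (the earlier $u$-swap lemma).
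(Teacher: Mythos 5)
Your proposal is correct, and its algorithmic accounting coincides with the paper's proof almost line for line: both charge each $x$-swap event a fixed cost of $O(\log^{d+1} n)$ for Steps 1--2 (locating the $O(\log^d n)$ relevant nodes $v$ and searching each list $L(B'(v))$ in $O(\log n + k_v)$ time) plus a variable cost of $O(\log^{d+1} n)$ per reported pair in Step 3, and both observe that $\sum_v k_v$ equals the number of genuine changes to the Semi-Yao graph, so the total is $O\bigl((n^2 + N)\log^{d+1} n\bigr)$. Where you genuinely diverge is on the key combinatorial bound $N = O(n^2\beta_{2s+2}(n))$: the paper simply cites Rahmati~\etal~(SoCG'13) for the number of changes to the Semi-Yao graph, whereas you re-derive it from scratch by fixing $w$ and bounding the breakpoints of the lower envelope of the $x_l$-trajectories, each restricted to the time intervals during which its point lies in $C_l(w)$. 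Your treatment of the partiality is sound: membership of $r$ in the moving cone $C_l(w)$ is a conjunction of $d = O(1)$ sign conditions on degree-$s$ polynomials $u_i(r) - u_i(w)$, so each point contributes $O(1)$ arcs, and the envelope of $O(n)$ arcs with at most $s$ pairwise crossings has complexity $O(\lambda_{s+2}(n))$, giving $N = O(n^2\beta_{s+2}(n)) = O(n^2\beta_{2s+2}(n))$ after summing over the $n$ apices and the constant number of cones. This self-contained route buys two things. First, the paper's citation is to a result proved for the plane, and your envelope argument is exactly the $d$-dimensional generalization that the paper leaves implicit, so your proof actually closes a small gap in the exposition. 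Second, you get a nominally sharper parameter ($\beta_{s+2}$ rather than $\beta_{2s+2}$), since the relevant functions here are single coordinates of degree $s$ rather than the degree-$2s$ squared-distance functions that force $\lambda_{2s+2}$ elsewhere in this literature. Your separation of interior envelope breakpoints (charged to this lemma) from arc endpoints (cone entry/exit, i.e., $u$-swap events charged to the preceding lemma) is the right bookkeeping, and your count of $O(sn^2) = O(n^2)$ adjacent transpositions in $L(x_l)$ matches the paper's event count.
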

\begin{proof} 
All of the internal nodes $v$ at Step 1 can be found in $O(\log^d n)$ time. 

For each internal node $v$ of Step 2, the update mechanism spends $O(\log n + k_v)$ time where $k_v$ is the number of all the pairs $(w,\ddot{w}_l)\in B'(v)$ such that $\ddot{w}_l=p$. For all of the internal nodes $v$, the second step takes $O(\log^{d+1} n + \sum_v k_v)$ time. Note that $\sum_v k_v$ is equal to the number of exact changes to the Semi-Yao graph. Since, the number of changes to the Semi-Yao graph of a set of $n$ moving points in a fixed dimension $d$ is $O(n^2\beta_{2s+2}(n))$~\cite{socg17-rahmati}, the total processing time of Step 2 for all of the $O(n^2)$ $x$-swap events is $O(n^2\log^{d+1} n + n^2\beta_{2s+2}(n))=O(n^2\log^{d+1} n)$. 

The processing time to apply changes to the KDS for each $w$ of Step 3, which in fact is a change to the Semi-Yao graph, is $O(\log^{d+1} n)$. So, the update mechanism spends $O(n^2\beta_{2s+2}(n)\log^{d+1} n)$ time to handle all of the $O(n^2)$ events.

Hence, the total processing time for all the $x$-swap events is $O(n^2\beta_{2s+2}(n)\log^{d+1} n)$.
\end{proof}
The following summarizes the complexity of our Semi-Yao graph KDS.
\begin{theorem}\label{the:KineticSYG}
Our KDS for maintenance of the Semi-Yao graph of a set of $n$ moving points in $\mathbb{R}^d$, where the trajectory of each point is an algebraic function of at most constant degree $s$, uses  $O(n\log^d n)$  space and handles $O(n^2)$ events with a total cost of $O(n^2\beta_{2s+2}(n)\log^{d+1} n)$. The KDS is compact, efficient, responsive (in an amortized sense), and local.
\end{theorem}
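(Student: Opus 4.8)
The plan is to assemble the theorem from the two event-handling lemmas already proved, Lemma~\ref{the:Uswap} and Lemma~\ref{the:Xswap}, and then to verify each of the four KDS quality criteria directly against their definitions. First I would bound the number of events and the total cost. The certificate set consists only of the consecutive-pair comparisons in the $d+1$ sorted lists $L(u_1),\dots,L(u_d),L(x_l)$ for each of the $c=O(1)$ cones, so every event is either a $u$-swap or an $x$-swap. Summing the two contributions gives $O(n^2)+O(n^2)=O(n^2)$ events. For the running time, the $u$-swaps cost $O(n^2\log^{d+1} n)$ in total by Lemma~\ref{the:Uswap}, while the $x$-swaps cost $O(n^2\beta_{2s+2}(n)\log^{d+1} n)$ by Lemma~\ref{the:Xswap}; since $\beta_{2s+2}(n)\ge 1$ the latter dominates, yielding the claimed total cost.

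Next I would settle the space bound. Per cone, the RBRT itself occupies $O(n\log^d n)$ space by the result of Abam and de Berg~\cite{Abam:2011:KSX:1971362.1971367}. The remaining auxiliary data are charged by a counting argument: each point $w$ lies in $O(\log^d n)$ sets $B(\cdot)$, so $\sum_v |B'(v)| = \sum_w |\{v : w\in B(v)\}| = O(n\log^d n)$, which simultaneously bounds the combined size of all sorted lists $L(B'(\cdot))$ and of all link sets $Link(w)$; the values $r(v)$ add only $O(n\log^{d-1} n)$. Multiplying by the constant number of cones keeps the total at $O(n\log^d n)$, so the KDS is compact.

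The three behavioural criteria then follow. For \emph{responsiveness}, each $u$-swap is handled in worst-case $O(\log^{d+1} n)$ time, and the $x$-swap bound of Lemma~\ref{the:Xswap} amortizes to $O(\beta_{2s+2}(n)\log^{d+1} n)$ per event, so the KDS is responsive in the amortized sense. For \emph{efficiency}, the number of external events --- the combinatorial changes of the Semi-Yao graph --- is $\Theta(n^2\beta_{2s+2}(n))$ in the worst case, which is at least the total number $O(n^2)$ of events processed, so the ratio of total to external events is $O(1)$. The point I expect to require the most care is \emph{locality}, which is exactly where this KDS improves on Rahmati~\etal~\cite{socg17-rahmati}: although each point participates in $O(\log^d n)$ data-structure records (its entries in $Link$, in the lists $L(B')$, and the $r$-values along its ancestor paths), those are not certificates. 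The certificates are solely the consecutive-pair comparisons in the $d+1$ sorted lists of each of the $O(1)$ cones, so at any instant a fixed point is adjacent to at most two others in each list and hence belongs to only $O(1)$ certificates. Separating ``membership in the data structure'' from ``membership in a certificate'' is the crux, and it is what keeps the KDS local in the worst case even though an individual $x$-swap may trigger many graph changes.
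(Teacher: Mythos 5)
Your proposal is correct and follows essentially the same route as the paper's own proof: it assembles the event and cost bounds from Lemmas~\ref{the:Uswap} and~\ref{the:Xswap}, bounds space via $\sum_v|B(v)|=O(n\log^d n)$ per cone, argues locality from the fact that certificates live only on consecutive pairs in the $O(1)$ sorted lists (two per point per list), and gets efficiency by comparing the $O(n^2)$ processed events against the $O(n^2\beta_{2s+2}(n))$ worst-case changes to the Semi-Yao graph. Your explicit separation of ``membership in data-structure records'' from ``membership in certificates'' and the finer space accounting for $Link(\cdot)$ and $r(\cdot)$ merely flesh out details the paper states tersely; the only cosmetic overstatement is claiming $\Theta(n^2\beta_{2s+2}(n))$ external events where the paper cites only the upper bound, which the efficiency argument does not actually need.
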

\begin{proof}
From Lemmas~\ref{the:Uswap} and ~\ref{the:Xswap}, all of the $O(n^2)$ events can be processed in time $O(n^2\beta_{2s+2}(n)\log^{d+1} n)$; this means that the KDS is responsive in an amortized sense. $\sum_v|B(v)|=O(n\log^d n)$ and the number of the certificates is $O(n)$; so, the KDS is compact. A particular point in a sorted list $L(u_i)$ participates in two certificates, one certificate is created with previous point and one with the next point. So, the number of events associated to a particular point is $O(1)$ which means the KDS is local. Since the number of the external events is $O(n^2\beta_{2s+2}(n))$ and the number of the events that the KDS processes is $O(n^2)$, the KDS is efficient.
\end{proof}
\section{Kinetic All Nearest Neighbors}\label{sec:KineticANN}
Given the kinetic Semi-Yao graph, a super-graph of the nearest neighbor graph over time, from Section~\ref{sec:KineticSY}, we can easily maintain the nearest neighbor to each point $p\in P$. Let $Inc(p)$ be the set of edges in the Semi-Yao graph incident to the point $p$.  Using a \textit{dynamic and kinetic tournament tree} (DKTT)~\cite{Agarwal:2008:KDD:1435375.1435379,Basch:1997:DSM:314161.314435}, we can maintain the nearest neighbor to $p$ over time. Denote by ${\cal TT}_p$ the DKTT corresponding to the point $p$. The elements of ${\cal TT}_p$ are the edges in $Inc(p)$. The root of the ${\cal TT}_p$ maintains the edge with minimum length between the edges in $Inc(p)$. Let $m_p$ be the number of all insertions/deletions into the set $Inc(p)$ over time. 
\begin{lemma}\label{the:KineticTT} {\tt (Theorem 3.1.~\cite{Agarwal:2008:KDD:1435375.1435379})}
The dynamic and kinetic tournament tree ${\cal TT}_p$ can be constructed in linear time. For a sequence of $m_p$ insertions and deletions into the ${\cal TT}_p$, whose maximum size ${\cal TT}_p$ at any time is $n$, the ${\cal TT}_p$ generates at most $O(m_p\beta_{2s+2}(n)\log n)$ tournament events, for a total cost of $O(m_p\beta_{2s+2}(n)\log^2 n)$. Each event can be handled in the worst-case time $O(\log^2 n)$.
\end{lemma}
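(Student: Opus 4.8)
The plan is to treat $\mathcal{TT}_p$ as a balanced binary tree whose leaves hold the edges of $Inc(p)$ and whose every internal node holds the shorter of the two edges held by its children, so that the root always holds the minimum-length edge incident to $p$; correctness is certified by one inequality per internal node comparing the two children's current winners. The value attached to an edge $(p,q)$ is the squared distance $|pq|^2$, which is a polynomial of degree $2s$ in $t$ because each coordinate of each trajectory has degree at most $s$. Hence the comparison function of any two edges is a degree-$2s$ polynomial and changes sign at most $2s$ times, which is what feeds the Davenport-Schinzel order $2s+2$.

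First I would establish the bound for a \emph{static} set (no insertions or deletions). Fix an internal node $v$ and let $k$ be the number of leaves in its subtree. The winner stored at $v$ over time traces the lower envelope of the $k$ distance-polynomials below $v$; since any two of them cross at most $2s$ times, this envelope has $O(\lambda_{2s+2}(k))$ breakpoints, and each certificate failure at $v$ corresponds to one such breakpoint. I would then sum level by level: at a fixed level the subtree sizes total at most $n$, so by superadditivity of $\lambda_{2s+2}$ the number of events on that level is $O(\lambda_{2s+2}(n)) = O(n\beta_{2s+2}(n))$, and summing over the $O(\log n)$ levels gives $O(n\beta_{2s+2}(n)\log n)$ events for a fixed set of size $n$.

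Next I would upgrade to the dynamic setting and replace the bound's $n$ by the number $m_p$ of operations. Insertions and deletions are carried out as in a balanced search tree keyed by edge id, so one operation touches $O(\log n)$ nodes and is rebalanced by $O(\log n)$ rotations, each recomputing a constant number of subtree winners and certificates. The event count is then obtained by charging to each operation the events occurring at the $O(\log n)$ nodes it creates or restructures, and combining this charging with the per-node envelope bound above to get $O(m_p\beta_{2s+2}(n)\log n)$. I expect this step to be the main obstacle: the clean level-by-level summation relies on a fixed tree, so making the accounting rigorous under a changing element set -- partitioning each node's lifetime into intervals over which it compares a fixed pair of children, bounding the events per interval by an envelope complexity, and ensuring the rebalancing does not create events that are double-counted -- is the delicate part, and is exactly where the amortization and the $\beta_{2s+2}$ factor enter.

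Finally I would account for the per-event cost and for construction. When a certificate at $v$ fails I swap the winner at $v$, compute the next failure time of the replacement certificate by solving one degree-$2s$ equation in $O(1)$ time, and update the global event queue in $O(\log n)$ time; if the winner at $v$ changed I propagate the correction toward the root, so a single event costs $O(\log^2 n)$ in the worst case (a full root path of $O(\log n)$ queue operations) but only $O(\log n)$ amortized once the propagation is charged against the winner changes it produces, yielding the total $O(m_p\beta_{2s+2}(n)\log^2 n)$. For the initial construction I would build the balanced tree bottom-up and fill in all winners and certificates in $O(n)$ time, then heapify the $O(n)$ certificates into the event queue in $O(n)$ time, which establishes the claimed linear construction bound.
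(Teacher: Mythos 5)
First, note that the paper does not prove this lemma at all: it is quoted verbatim as Theorem~3.1 of Agarwal, Kaplan, and Sharir~\cite{Agarwal:2008:KDD:1435375.1435379} and used as a black box, so the only meaningful comparison is against the proof in that cited source. Your static analysis matches the standard kinetic-tournament argument (Basch--Guibas--Hershberger): winner changes at a node $v$ are breakpoints of the lower envelope of the degree-$2s$ distance polynomials below $v$, superadditivity of $\lambda_{2s+2}$ gives $O(\lambda_{2s+2}(n))$ per level, and $O(\log n)$ levels give $O(n\beta_{2s+2}(n)\log n)$. Your cost accounting is also right: $O(\log^2 n)$ worst case per event (a root path of $O(\log n)$ priority-queue updates), $O(\log n)$ amortized against the winner changes the propagation produces, and linear-time bottom-up construction with a heapify of the certificates.

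The genuine gap is exactly the step you flag as ``the main obstacle'' and then do not resolve: the dynamic event bound $O(m_p\beta_{2s+2}(n)\log n)$, which is the actual content of the theorem. Your mechanism --- a generic rotation-balanced search tree with ``$O(\log n)$ rotations'' per operation, charging to each operation the events at the nodes it restructures --- does not suffice. A single rotation high in the tree changes the children of a node of weight $\Theta(n)$ and starts a fresh match there whose subsequent epoch can contribute up to $\Theta(\lambda_{2s+2}(n))$ envelope breakpoints; with red-black-style rebalancing nothing prevents an adversarial update sequence from repeatedly restructuring heavy nodes, so charging each epoch's events to the one operation that opened it does not yield an $O(\beta_{2s+2}(n)\log n)$ amortized per-operation bound. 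The missing idea in Agarwal~\etal's proof is to realize the tournament as a \emph{weight-balanced} BB[$\alpha$] tree, whose defining amortization property is that between two consecutive rebalancing operations at a node of weight $w$, $\Omega(w)$ updates must occur in its subtree. This caps the number of epochs a heavy node can have: each epoch at a weight-$w$ node is an interval during which its children's subtrees are structurally stable, the winners there trace the envelope of $O(w)$ partial functions (arcs, whence order $2s+2$ rather than $2s$ for the crossing count), contributing $O(\lambda_{2s+2}(w)) = O(w\,\beta_{2s+2}(n))$ events, which the $\Omega(w)$ updates absorb at $O(\beta_{2s+2}(n))$ apiece per level. Without the weight-balance epoch property (or an equivalent guarantee), your partition-into-intervals plan has no bound on the number of intervals weighted by their envelope complexities, and the proof does not go through.
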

From Lemma~\ref{the:KineticTT}, the number of all events for maintenance of all the nearest neighbors is equal to $O(\sum_{p\in P} m_p\beta_{2s+2}(n)\log n)=O(\beta_{2s+2}(n)\log n\sum_{p\in P} m_p)$. Inserting (resp. deleting) an edge $pq$ in the Semi-Yao graph makes one insertion (resp. deletion) in ${\cal TT}_p$ and one in ${\cal TT}_q$. So, $\sum_{p\in P} m_p$ is in order of the number of exact changes to the Semi-Yao graph. Since $\sum_p|Ins(p)|=O(n)$, and since the number of all changes (edge insertions and edge deletions) to the Semi-Yao graph is equal to $O(n^2\beta_{2s+2}(n))$~\cite{socg17-rahmati}, the following corollary results.
\begin{corollary}\label{the:KineticANN}
Given a KDS for  maintenance of the Semi-Yao graph, all the nearest neighbors can be maintained by using a kinetic algorithm that generates $O(n^2\beta_{2s+2}^2(n)\log n)$ tournament events, for a total cost of $O(n^2\beta_{2s+2}^2(n)\log^2 n)$. Each event can be handled in the worst-case time $O(\log^2 n)$. The number of events associated to a particular point is constant on average.
\end{corollary}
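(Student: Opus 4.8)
The plan is to derive the corollary entirely by aggregating the per-point guarantee of Lemma~\ref{the:KineticTT} over all $p\in P$ and then controlling the aggregate through the known bound on the number of combinatorial changes of the Semi-Yao graph. Because the Semi-Yao graph contains the nearest-neighbor graph (Lemma~\ref{the:SYcontainsNNG}), keeping the minimum-length edge of $Inc(p)$ at the root of ${\cal TT}_p$ for every $p$ is enough to report each point's nearest neighbor; hence the cost of the all-nearest-neighbors KDS is exactly the sum of the costs of the individual tournament trees ${\cal TT}_p$, and no further certificates are needed beyond those internal to the ${\cal TT}_p$'s.

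First I would fix a point $p$ and note that ${\cal TT}_p$ undergoes exactly $m_p$ insertions and deletions (the updates to $Inc(p)$ produced by the kinetic Semi-Yao graph) while never exceeding size $n$, so Lemma~\ref{the:KineticTT} supplies $O(m_p\beta_{2s+2}(n)\log n)$ tournament events at total cost $O(m_p\beta_{2s+2}(n)\log^2 n)$, with each event handled in worst-case time $O(\log^2 n)$. Summing over $p$ and using that the factors $\beta_{2s+2}(n)$ and $\log n$ depend only on the common maximum size $n$ and therefore factor out of the sum, the event count becomes $O\!\left(\beta_{2s+2}(n)\log n\sum_{p\in P}m_p\right)$ and the total cost becomes $O\!\left(\beta_{2s+2}(n)\log^2 n\sum_{p\in P}m_p\right)$.

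The crux is then to bound $\sum_{p\in P}m_p$, and I expect this to be the only real content of the proof. Here I would use that an insertion (resp.\ deletion) of an edge $pq$ of the Semi-Yao graph causes exactly one insertion (resp.\ deletion) in ${\cal TT}_p$ and one in ${\cal TT}_q$, so $\sum_{p\in P}m_p$ is within a constant factor of the total number of edge changes undergone by the kinetic Semi-Yao graph. Invoking the bound $O(n^2\beta_{2s+2}(n))$ on that number from Rahmati~\etal~\cite{socg17-rahmati} gives $\sum_{p\in P}m_p=O(n^2\beta_{2s+2}(n))$, and substituting back yields $O(n^2\beta_{2s+2}^2(n)\log n)$ tournament events and total cost $O(n^2\beta_{2s+2}^2(n)\log^2 n)$; the per-event worst-case bound $O(\log^2 n)$ is inherited verbatim from Lemma~\ref{the:KineticTT}. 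The argument hence rests squarely on the external change bound for the Semi-Yao graph together with the two-trees-per-edge observation.

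For the last assertion I would read ``events associated to a particular point'' as the certificates in which the point participates, matching the usage elsewhere in the paper. Each ${\cal TT}_p$ stores $O(|Inc(p)|)$ certificates, and since the Semi-Yao graph has $O(n)$ edges we have $\sum_{p\in P}|Inc(p)|=O(n)$; thus the total number of certificates over all tournament trees is $O(n)$, which is $O(1)$ per point on average. I would stress that this is only an average statement: a single point of large Semi-Yao degree can sit in many certificates, which is precisely why this KDS is not local in the worst case.
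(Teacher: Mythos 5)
Your proposal is correct and follows essentially the same route as the paper's own argument: you aggregate Lemma~\ref{the:KineticTT} over all points, bound $\sum_{p\in P} m_p$ via the two-trees-per-edge observation together with the $O(n^2\beta_{2s+2}(n))$ bound on Semi-Yao graph changes from~\cite{socg17-rahmati}, and derive the average-case locality from $\sum_{p\in P}|Inc(p)|=O(n)$. All of these are exactly the steps in the paper's derivation preceding the corollary, so there is nothing to add.
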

The following theorem, which results from Theorem~\ref{the:KineticSYG} and Corollary~\ref{the:KineticANN}, gives the complexity of the KDS for maintenance of all the nearest neighbors.
\begin{theorem}\label{the:KineticAllNN}
Our kinetic data structure for maintenance of all the nearest neighbors of a set of $n$ moving points in $\mathbb{R}^d$, where the trajectory of each point is an algebraic function of at most constant degree $s$, has the following properties. 
\begin{enumerate}
\item The KDS uses $O(n\log^d n)$ space.
\item It processes $O(n^2)$ $u$-swap events, each in the worst-case time $O(\log^{d+1} n)$.
\item It processes $O(n^2)$ $x$-swap events, for a total cost of $O(n^2\beta_{2s+2}(n)\log^{d+1} n)$.
\item The KDS processes $O(n^2\beta_{2s+2}^2(n)\log n)$ tournament events, and processing all the events takes $O(n^2\beta_{2s+2}^2(n)\log^2 n)$ time.
\item The KDS is compact, efficient, responsive in an amortized sense, and local on average, meaning that each point participates in a constant number of certificates on average.
\end{enumerate}
\end{theorem}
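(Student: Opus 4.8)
The plan is to assemble the theorem from the two results already in hand: the Semi-Yao graph KDS of Theorem~\ref{the:KineticSYG} and the tournament-tree analysis of Corollary~\ref{the:KineticANN}. The structure to be maintained is the kinetic Semi-Yao graph together with one dynamic and kinetic tournament tree ${\cal TT}_p$ per point $p$, whose elements are the edges of $Inc(p)$ and whose root certifies the minimum-length incident edge; by Lemma~\ref{the:SYcontainsNNG} this root is exactly the nearest neighbor of $p$. Each of the five numbered claims is then read off from one of these two components, with the qualitative properties of claim~5 verified individually. The underlying observation making everything add up cleanly is that the swap certificates of the Semi-Yao layer and the tournament certificates of the DKTT layer are distinct certificate types, so their event counts and processing costs simply sum.

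For claim~1 I would combine the two space contributions: the RBRT-based Semi-Yao KDS already uses $O(n\log^d n)$ space by Theorem~\ref{the:KineticSYG}, while the tournament trees contribute only $O(\sum_{p\in P}|Inc(p)|)$, and since the Semi-Yao graph has $O(n)$ edges this sum is $O(n)$; the total stays $O(n\log^d n)$. Claims~2 and~3 are precisely the $u$-swap and $x$-swap bounds of Lemmas~\ref{the:Uswap} and~\ref{the:Xswap} carried over from Theorem~\ref{the:KineticSYG}, and claim~4 is precisely the tournament-event bound of Corollary~\ref{the:KineticANN}. No new computation is needed for these beyond noting that the two layers' events coexist without interfering.

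The qualitative properties in claim~5 I would then check one at a time. Compactness is immediate from claim~1. Responsiveness in the amortized sense holds because $u$-swap events cost $O(\log^{d+1}n)$ in the worst case, $x$-swap events cost $O(\log^{d+1}n)$ amortized by Lemma~\ref{the:Xswap}, and each tournament event costs $O(\log^2 n)$ by Corollary~\ref{the:KineticANN}. For efficiency I would argue that the total event count $O(n^2\beta_{2s+2}^2(n)\log n)$ exceeds the number of genuine changes to the all-nearest-neighbors attribute by only a polylogarithmic factor: the DKTT bound of Lemma~\ref{the:KineticTT} inflates each of the $\sum_p m_p = O(n^2\beta_{2s+2}(n))$ element insertions and deletions by at most an $O(\beta_{2s+2}(n)\log n)$ factor, and those insertions and deletions are themselves driven by the external combinatorial changes of the maintained structure.

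Finally, for locality on average I would separate the two certificate sources. In the Semi-Yao layer each point lies in only a constant number of order certificates (two per sorted list, over a constant number of lists), so that layer is local even in the worst case. The tournament layer is where worst-case locality fails: a single point $p$ may be incident to $\Theta(n)$ Semi-Yao edges and hence appear in $\Theta(n)$ tournament certificates, so no worst-case bound survives. Because $\sum_{p}|Inc(p)|=O(n)$, however, the total number of tournament certificates is $O(n)$, and dividing by the $n$ points yields $O(1)$ certificates per point on average. I expect this averaging step, together with articulating cleanly why the bound cannot be promoted to worst-case, to be the main subtlety; the efficiency argument is the other place demanding care, since it requires comparing the processed event count against the genuine changes rather than merely summing upper bounds. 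Everything else is bookkeeping that adds the two layers' contributions.
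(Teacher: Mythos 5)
Your proposal is correct and follows essentially the same route as the paper, which states Theorem~\ref{the:KineticAllNN} as a direct combination of Theorem~\ref{the:KineticSYG} (claims 1--3 and the Semi-Yao layer's contribution to claim 5) with Corollary~\ref{the:KineticANN} (claim 4 and average-case locality); your added detail on summing the two layers' space and certificate counts, and on why tournament-layer locality holds only on average via $\sum_p |Inc(p)| = O(n)$, matches the paper's reasoning in Section~\ref{sec:KineticANN}.
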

\section{Kinetic All $(1+\epsilon)$-Nearest Neighbors}\label{sec:KineticEpsANN}
Let $q$ be the nearest neighbor of $p$ and let $\hat{q}$ be some point such that $|p\hat{q}|<(1+\epsilon).|pq|$. We call $\hat{q}$ the \textit{$(1+\epsilon)$-nearest neighbor} of $p$. In this section,  we maintain some $(1+\epsilon)$-nearest neighbor for any point $p\in P$.

\begin{wrapfigure}{r}{0.4\textwidth}
\vspace{-10pt}
  \begin{center}
    \includegraphics[width=0.35\textwidth]{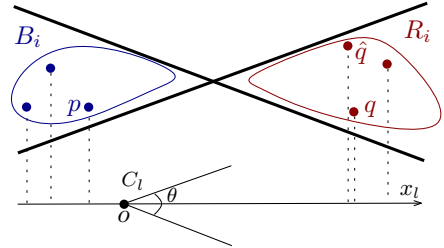}
  \end{center}
  \vspace{-15pt}
  \caption{\small The point $\hat{q}$ (resp. $p$) has minimum (resp. maximum) $x_l$-coordinate among the points in $R_i$ (resp. $B_i$).}
  \vspace{-10pt}
  \label{fig:RNNgraph}
\end{wrapfigure}
Consider a cone $C_l$ of opening angle $\theta$, which is bounded by $d$ half-spaces. Let $x_l$ be the  vector inside the cone $C_l$ that passes through the apex of $C_l$. Recall a CSPD $\Psi_{C_l}=\{(B_1,R_1),...,(B_m,R_m)\}$ for $P$ with respect to the cone $C_l$. Figure~\ref{fig:RNNgraph} depicts the cone $C_l$ and a pair $(B_i,R_i)\in \Psi_{C_l}$. Let $b_i$ (resp. $r_i$) be the point with the maximum (resp. minimum) $x_l$-coordinate among the points in $B_i$ (resp. $R_i$). Let $E_l=\{(b_i,r_i)|~i=1,...,m\}$. We call the graph $G(P,E_l)$ the \textit{relative nearest neighbor graph} (or RNN$_l$ graph for short) with respect to $C_l$. Call the graph $G(P,\cup_l E_l)$ the \textit{RNN graph}.  The RNN graph has the following interesting properties: $(i)$ It can be constructed in $O(n\log^d n)$ time by using a $d$-dimensional RBRT, $(ii)$ it has $O(n\log^{d-1} n)$ edges, and $(iii)$ the degree of each point is $O(\log^d n)$. Lemma~\ref{the:RNNGlemma} below shows another property of the RNN graph which leads us to find some $(1+\epsilon)$-nearest neighbor for any point $p\in P$.

\begin{lemma}\label{the:RNNGlemma}
Between all the edges incident to a point $p$ in the RNN graph, there exists an edge $(p,\hat{q})$ such that $\hat{q}$ is some $(1+\epsilon)$-nearest neighbor to $p$.
\end{lemma}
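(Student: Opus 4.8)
The plan is to show that the point $\hat q := r_i$ works, where $(B_i,R_i)$ is the pair of the CSPD $\Psi_{C_l}$ that ``realizes'' the true nearest neighbor of $p$. First I would fix the opening angle $\theta$ so that $\theta\le \pi/3$ and $\cos(\theta/2)\ge 1/(1+\epsilon)$; this is possible for every $\epsilon>0$. Let $q$ be the true nearest neighbor of $p$. Since the cones $C_1,\dots,C_c$ cover $\mathbb{R}^d$, there is a cone with $q\in C_l(p)$, and by the first property of $\Psi_{C_l}$ there is a unique pair $(B_i,R_i)$ with $p\in B_i$ and $q\in R_i$. The proof then splits into two parts: (A) the edge $(b_i,r_i)$ of $E_l$ is actually incident to $p$, i.e.\ $p=b_i$; and (B) the endpoint $r_i$ is within a factor $(1+\epsilon)$ of $|pq|$.

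Part (B) is the routine theta-graph estimate, which I would dispatch first. Because $p\in B_i$, the second property of $\Psi_{C_l}$ gives $R_i\subseteq C_l(p)$, so in particular $r_i\in C_l(p)$ and the angle between $\overrightarrow{pr_i}$ and the cone axis $x_l$ is at most $\theta/2$. Since $q\in R_i$ and $r_i$ has the minimum $x_l$-coordinate in $R_i$, we have $x_l(r_i)-x_l(p)\le x_l(q)-x_l(p)\le |pq|$. Combining the angle bound with this projection bound yields $|pr_i|\le (x_l(r_i)-x_l(p))/\cos(\theta/2)\le |pq|/\cos(\theta/2)\le (1+\epsilon)|pq|$, so $r_i$ is a $(1+\epsilon)$-nearest neighbor of $p$.

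The crux is Part (A), that $p$ is the point of $B_i$ with maximum $x_l$-coordinate. I would prove this by contradiction: suppose some $w\in B_i$ has $x_l(w)>x_l(p)$. By the second property of $\Psi_{C_l}$ (applied to $w\in B_i$ and $q\in R_i$) we get $w\in \bar{C}_l(q)$, and also $p\in \bar{C}_l(q)$. Thus $p$ and $w$ both lie in the reflected cone of half-angle $\theta/2$ with apex $q$, with $w$ strictly ahead of $p$ along the axis ($x_l(p)<x_l(w)\le x_l(q)$). Writing the axial gaps as $a=x_l(w)-x_l(p)>0$ and $b=x_l(q)-x_l(p)$, membership in $\bar{C}_l(q)$ bounds the perpendicular offsets of $p$ and $w$ by $b\tan(\theta/2)$ and $(b-a)\tan(\theta/2)$ respectively, and a short computation gives
\[
|pq|^2-|pw|^2\ \ge\ (b-a)\bigl[(b+a)-(3b-a)\tan^2(\theta/2)\bigr]\ >\ 0
\]
whenever $\tan^2(\theta/2)\le 1/3$, i.e.\ whenever $\theta\le \pi/3$. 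Hence $|pw|<|pq|$, contradicting that $q$ is the nearest neighbor of $p$ (here $w\neq q$ since $B_i\cap R_i=\emptyset$). Therefore every $w\in B_i$ satisfies $x_l(w)\le x_l(p)$, and since $p\in B_i$ this forces $p=b_i$.

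Combining the two parts, the edge $(b_i,r_i)=(p,r_i)$ lies in $E_l$ and hence in the RNN graph, it is incident to $p$, and its other endpoint $\hat q=r_i$ is a $(1+\epsilon)$-nearest neighbor of $p$. The only genuinely delicate step is Part (A); everything else is the standard cone estimate. I expect the perpendicular-offset computation behind the displayed inequality to be the main obstacle to write carefully, since it is exactly what pins the threshold to $\theta\le\pi/3$ and ties the construction back to the cone condition of Lemma~\ref{the:keyLemma}.
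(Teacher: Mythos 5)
Your proof is correct, and its skeleton is the same as the paper's: locate the unique pair $(B_i,R_i)\in\Psi_{C_l}$ with $p\in B_i$ and $q\in R_i$, show $p=b_i$, and bound $|pr_i|$. The difference is that you re-derive from first principles the two facts the paper simply cites. Your Part (A) is exactly Lemma~\ref{the:keyLemma} in reflected form: since $q$ is nearest to $p$ and $p\in\bar{C}_l(q)$, applying that lemma to the cone $\bar{C}_l(q)$ with axis $-x_l$ gives that $p$ has maximum $x_l$-coordinate in $P\cap\bar{C}_l(q)\supseteq B_i$, which is the paper's one-line argument; your perpendicular-offset computation (which checks out, including the factor $(3b-a)$ obtained by keeping $\|h_p\|^2$ on the positive side of the difference, and the threshold $\tan^2(\theta/2)\le 1/3\Leftrightarrow\theta\le\pi/3$) amounts to an independent re-proof of that lemma. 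For Part (B), the paper instead cites the stronger Abam--de Berg inequality $|p\hat{q}|+(1+\epsilon)\cdot|q\hat{q}|\le(1+\epsilon)\cdot|pq|$ and extracts the needed consequence; your direct estimate $|pr_i|\le(x_l(r_i)-x_l(p))/\cos(\theta/2)\le|pq|/\cos(\theta/2)$ is a weaker but entirely sufficient substitute, and it has the virtue of making explicit how $\theta$ must be chosen as a function of $\epsilon$ (namely $\cos(\theta/2)\ge 1/(1+\epsilon)$ jointly with $\theta\le\pi/3$), where the paper only asserts that an appropriate $\theta$ exists. Two small remarks: your strictness at the boundary case $b=a$ does need $w\neq q$, and the appeal to $B_i\cap R_i=\emptyset$ is valid for the RBRT-based CSPD (a point never belongs to $R(v)$ when $v$ is a canonical node hanging off its own search path ${\cal P}$), though the paper never states this property explicitly; and, like the paper, you implicitly assume no ties in the $x_l$-order when concluding $p=b_i$, which is the standard general-position convention in this KDS setting.
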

\begin{proof}
Let $q$ be the nearest neighbor to $p$ and let $q\in C_l(p)$. From the definition of a CSPD with respect to $C_l$, for $p$ and $q$ there exists a unique pair $(B_i,R_i)\in \Psi_{C_l}$ such that $p\in B_i$ and $q\in R_i$. From Lemma~\ref{the:keyLemma}, $p$ has the maximum $x_l$-coordinate among the points in $B_i$. 

Let $\hat{q}$ be the point with the minimum $x_l$-coordinate among the points in $R_i$. For any $\epsilon>0$, there exist an appropriate angle $\theta$ and a vector $x_l$ such that $|p\hat{q}|+(1+\epsilon).|q\hat{q}|\leq (1+\epsilon).|pq|$~\cite{Abam:2011:KSX:1971362.1971367}; this satisfies that $|p\hat{q}|\leq (1+\epsilon).|pq|$. Therefore, the edge $(p,\hat{q})$ which is an edge of the RNN graph gives some $(1+\epsilon)$-nearest neighbor.
\end{proof}

Consider the set $E_l$ of the edges of the RNN$_l$ graph. Let $N_l(p)=\{r_i|~(b_i,r_i)\in E_l~and~b_i=p\}$. Denote by $n_l(p)$ the point in $N_l(p)$ whose $x_l$-coordinate is minimum. Let $L(N_l(p))$ be a sorted list of the points in $N_l(p)$ in ascending order according to their $x_l$-coordinates; the first point in $L(N_l(p))$ gives $n_l(p)$. Similar to Section~\ref{sec:KineticSY} we handle two types of events, \textit{$u$-swap events} and \textit{$x$-swap events}. Note that we do not need to define a certificate for each two consecutive points in $L(N_l(.))$.  The following shows how to apply changes (\eg, insertion, deletion, and exchanging the order between two consecutive points) to the sorted lists $L(N_l(.))$ when an event occurs.

Each event can make $O(\log^d n)$ updates to the edges of $E_l$. Consider an updated pair $(b_i,r_i)$ that the value of $r_i$ (resp. $b_i$) changes from $p$ to $q$. For this update, we must delete $p$ (resp. $r_i$) form the sorted list $L(N_l(b_i))$ (resp. $L(N_l(p))$) and insert $q$ (resp. $r_i$) into $L(N_l(b_i))$ (resp. $L(N_l(q))$). If the event is an $x$-swap event, we must find all the $i$ where $r_i=q$ and check whether $n_l(b_i)=p$ or not; if so, $p$ and $q$ are in the same set $N_l(.)$ and we need to exchange their order in the corresponding sorted list $L(N_l(.))$. Since each update to a sorted list $L(N_l(.))$ can be done in $O(\log\log n)$, an event can be handled in worst-case time $O(\log^dn\log\log n)$.


From Lemma~\ref{the:RNNGlemma}, if the nearest neighbor of $p$ is in some cone $R_i$, then $r_i$ gives some $(1+\epsilon)$-nearest neighbor to $p$. Note that we do not know which cone $C_l(p)$, $1\leq l\leq c$, of $p$ contains the nearest neighbor of $p$, but it is obvious that the nearest point to $p$ among these $c$ points $n_1(p),...,n_c(p)$ gives some $(1+\epsilon)$-nearest neighbor of $p$. So, for all $l=1,...,c$, we track the distances of all the $n_l(p)$ to $p$ over time. A tournament tree (or a kinetic sorted list) of size $c$ with $O(1)$ certificates can be used to maintain the nearest point to $p$. Since each event makes $O(\log^d n)$ changes to the values of $n_l(.)$, and since the size of each tournament tree (or a kinetic sorted list) is constant, the number of all events to maintain all the $(1+\epsilon)$-nearest neighbors is $O(n^2\log^d n)$, and each point participates in $O(\log^d n)$ certificates.

From the above discussion the following theorem results.
\begin{theorem}\label{the:KinEpsANN}
Our KDS for maintenance of all the $(1+\epsilon)$-nearest neighbors of a set of $n$ moving points in $\mathbb{R}^d$, where the trajectory of each one is an algebraic function of constant degree $s$, uses $O(n\log^{d} n)$ space and handles $O(n^2\log^d n)$ events, each in the worst-case time $O(\log^d n\log\log n)$. The KDS is compact, efficient, responsive, and local.
\end{theorem}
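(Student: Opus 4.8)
The plan is to assemble the four quantitative claims of the theorem from properties already established for the RNN graph, so that once the right data structures are fixed the argument is mostly one of bookkeeping. Correctness is settled by Lemma~\ref{the:RNNGlemma}: for every point $p$ some incident RNN edge realizes a $(1+\epsilon)$-nearest neighbor, and since the true nearest neighbor lies in one of the $c=O(1)$ cones, the candidate in cone $C_l$ is exactly $n_l(p)$, the first element of $L(N_l(p))$. Thus it suffices to maintain, per point, the shortest of $|p\,n_1(p)|,\dots,|p\,n_c(p)|$ via a constant-size tournament tree, and the theorem reduces to accounting for the cost of keeping the lists $L(N_l(\cdot))$ and these tournaments up to date.

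For the space bound I would observe that the dominant term is the $d$-dimensional RBRT at $O(n\log^d n)$; the lists $L(N_l(\cdot))$ jointly store the $O(n\log^{d-1} n)$ edges of the RNN graph, and the $n$ constant-size tournaments add only $O(n)$, so the total is $O(n\log^d n)$, which is compact. I would count events and bound per-event work together. The driving combinatorial changes are the $O(n^2)$ $u$- and $x$-swap events in the coordinate lists $L(u_1),\dots,L(u_d),L(x_l)$ (two degree-$s$ trajectories cross $O(1)$ times in any one coordinate). Each swap alters $O(\log^d n)$ of the pairs $(b_i,r_i)$, hence $O(\log^d n)$ of the values $n_l(\cdot)$; every such change is an insertion or deletion in some $L(N_l(\cdot))$ together with one update to a constant-size tournament. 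Summed over all swaps the tournaments receive $O(n^2\log^d n)$ updates in total, and since a constant-size kinetic tournament emits $O(1)$ events per update, the total number of events is $O(n^2\log^d n)$.

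The worst-case per-event cost is where the chosen representation matters. I would realize each $L(N_l(\cdot))$ as a van Emde Boas structure over the point ranks that the RBRT already maintains, so that each of the $O(\log^d n)$ list updates triggered by one swap --- and the accompanying constant-size tournament update --- costs $O(\log\log n)$ in the worst case; this gives $O(\log^d n\log\log n)$ per event and hence worst-case responsiveness, the feature that distinguishes this KDS from the amortized exact one. The remaining criteria then follow: locality holds because each point lies in $O(\log^d n)$ sets $N_l(\cdot)$ and in one constant-size tournament, so it participates in $O(\log^d n)=O(\mathrm{polylog}(n))$ certificates; and efficiency holds because the event count exceeds the number of genuine changes to the maintained assignment only by a polylogarithmic factor.

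I expect the crux to be the per-update cost rather than the counting. Everything hinges on performing each list update in $O(\log\log n)$, which is possible only because the RBRT stores ranks rather than coordinates, so the lists can be indexed by a bounded-universe (van Emde Boas) structure; without this the bound degrades to $O(\log n)$ per update and the $\log\log n$ improvement is lost. A secondary correctness subtlety I would check carefully is the $x$-swap case in which the two swapped points $p$ and $q$ end up in the same set $N_l(\cdot)$: there one must detect, via the pairs with $r_i=q$ and the test $n_l(b_i)=p$, that the two list entries themselves must be reordered, rather than merely reinserted elsewhere.
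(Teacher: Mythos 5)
Your proposal follows the paper's own argument essentially step for step: correctness via Lemma~\ref{the:RNNGlemma} together with a constant-size tournament over $n_1(p),\dots,n_c(p)$ for each point, the same $O(n^2)$ $u$-/$x$-swap events each triggering $O(\log^d n)$ updates to the pairs $(b_i,r_i)$ and the lists $L(N_l(\cdot))$ (including the same-list reordering subtlety the paper handles via the test $n_l(b_i)=p$), and the same space, event-count, and locality accounting. Your one addition is to name an explicit mechanism (a van Emde Boas structure over the ranks maintained by the RBRT) for the $O(\log\log n)$ per-update cost, which the paper asserts without justification --- a consistent filling-in of a detail rather than a different route.
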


Hence, for maintenance of all the $(1+\epsilon)$-nearest neighbors, as opposed to the all nearest neighbors KDS in Section~\ref{sec:KineticANN}, each event can be handled in a polylogarithmic worst-case time, and the KDS is local.
\bibliographystyle{splncs03}
\bibliography{References}
\newpage
\appendix
\section{The Comparison Table}
\begin{table}[h]
\small
\centering
\begin{tabular}{ | c | p{2cm} | c | c | c | p{1.5cm} |}
\cline{2-6}
\multicolumn{1}{ c| }{} & 
problem & space  &  total number of events &  total processing time&  locality\\ \cline{1-6}
  \multirow{2}{*}{Agarwal~\etal\cite{Agarwal:2008:KDD:1435375.1435379}} & \multirow{2}{*}{all NNs in $\mathbb{R}^d$} &  \multirow{2}{*}{$O(n\log^d n)$} & \multirow{2}{*}{$O(n^2\beta_{2s+2}^2(n)\log^{d+1} n)$} & \multirow{2}{*}{$O(n^2\beta_{2s+2}^2(n)\log^{d+2} n)$} & $O(\log^d n)$ on average\\  \hline\hline
\multirow{4}{*}{Rahmati~\etal\cite{socg17-rahmati}} &  \multirow{2}{*}{all NNs  in $\mathbb{R}^2$}& \multirow{2}{*}{$O(n)$} & \multirow{2}{*}{$O(n^2\beta^2_{2s+2}(n)\log n)$} & \multirow{2}{*}{$O(n^2\beta^2_{2s+2}(n)\log^2 n)$} & $O(1)$ on average\\  \cline{2-6} 
& Semi-Yao graph  in $\mathbb{R}^2$& \multirow{2}{*}{$O(n)$} & \multirow{2}{*}{$O(n^2\beta_{2s+2}(n))$} & \multirow{2}{*}{$O(n^2\beta_{2s+2}(n)\log n)$} & $O(1)$ on average\\  \hline\hline
\multirow{4}{*}{{{\color{Mahogany}This paper}}} & \multirow{2}{*}{all NNs  in $\mathbb{R}^d$} & \multirow{2}{*}{$O(n\log^d n)$} & \multirow{2}{*}{$O(n^2\beta_{2s+2}^2(n)\log n)$} & \multirow{2}{*}{$O(n^2\beta_{2s+2}(n)\log^{d+1} n)$} & $O(1)$ on average\\ \cline{2-6} 
& Semi-Yao graph in $\mathbb{R}^d$ & \multirow{2}{*}{$O(n\log^d n)$} & \multirow{2}{*}{$O(n^2)$} & \multirow{2}{*}{$O(n^2\beta_{2s+2}(n)\log^{d+1} n)$} & $O(1)$ in worst-case\\  \hline
\end{tabular}
\vspace{+10pt}
\caption{ The comparison between our KDS's and the previous KDS's, assuming $d$ is arbitrary but fixed.}
\label{table:RelatedWork}
\end{table}
\end{document}